\newtheorem{definition}{Definition}
\newtheorem{proposition}{Proposition}
\newtheorem{theorem}{Theorem}
\newtheorem{lemma}{Lemma}
\newtheorem{example}{Example}
\newcommand{\R}{\mathbb{R}}
\newcommand{\Rn}{\mathbb{R}^n}
\newcommand{\Gr}[1]{{G_I}}
\newcommand\eatpunct[1]{}
\DeclareMathOperator*{\argmax}{argmax} 
\newcommand{\operator}[1]{{\normalfont \texttt{#1}}}
\newcommand\eval[1]{
  \hspace{2pt}
  \tikz[baseline=(n.base)]{\node(n)[inner sep=1.5pt]{$#1$};
    \draw[line width=0.25mm](n.west)--(n.south west)--(n.south east)--(n.east);
  }
  \hspace{2pt}
}
\let\OldStatex\Statex
\renewcommand{\Statex}[1][3]{%
  \setlength\@tempdima{\algorithmicindent}%
  \OldStatex\hskip\dimexpr#1\@tempdima\relax}
\begin{document}

\title{Utilizing Dependencies to Obtain Subsets \\of Reachable Sets}

\author{Niklas Kochdumper}
\email{niklas.kochdumper@tum.de}
\affiliation{%
  \institution{Technische Universit\"at M\"unchen}
}

\author{Bastian Sch\"urmann}
\email{bastian.schuermann@tum.de}
\affiliation{%
  \institution{Technische Universit\"at M\"unchen}
}

\author{Matthias Althoff}
\email{althoff@tum.de}
\affiliation{%
  \institution{Technische Universit\"at M\"unchen}
}

\begin{abstract}
	Reachability analysis, in general, is a fundamental method that supports
formally-correct synthesis, robust model predictive control, set-based
observers, fault detection, invariant computation, and conformance
checking, to name but a few. In many of these applications, one
requires to compute a reachable set starting within a previously
computed reachable set. While it was previously required to re-compute
the entire reachable set, we demonstrate that one can leverage the
dependencies of states within the previously computed set. As a result,
we almost instantly obtain an over-approximative subset of a previously computed reachable set by evaluating analytical maps. The advantages of our novel method are demonstrated for falsification of systems, optimization over reachable sets, and synthesizing safe maneuver automata. In all of these
applications, the computation time is reduced significantly.
\end{abstract}

\begin{CCSXML}
<ccs2012>
<concept>
<concept_id>10002944.10011123.10011676</concept_id>
<concept_desc>General and reference~Verification</concept_desc>
<concept_significance>500</concept_significance>
</concept>
<concept>
<concept_id>10002950.10003714.10003727.10003728</concept_id>
<concept_desc>Mathematics of computing~Ordinary differential equations</concept_desc>
<concept_significance>500</concept_significance>
</concept>
</ccs2012>
\end{CCSXML}

\ccsdesc[500]{General and reference~Verification}
\ccsdesc[500]{Mathematics of computing~Ordinary differential equations\\~}

\keywords{dependency preservation, reachability analysis, nonlinear dynamics, polynomial zonotopes.}

\maketitle


\section{Introduction} 
\label{sec:Introduction}

In this paper, we present a novel method to directly extract a reachable set within a pre-computed one as depicted in Fig.~\ref{fig:VisBasicApproach}. To achieve this, we conservatively abstract the original dynamics by a polynomial right-hand side and represent sets by polynomial zonotopes \cite{Althoff2013a}. Since polynomial zonotopes preserve the relation between the reachable states and the states in the initial set, we can extract the reachable set $\widehat{\mathcal{R}}$ for any initial set $\widehat{\mathcal{X}}_0 \subseteq \mathcal{X}_0$ directly from the reachable set $\mathcal{R}$ by evaluating an analytical equation (see Fig.~\ref{fig:VisBasicApproach}), which is computationally more efficient than computing $\widehat{\mathcal{R}}$ with a reachability algorithm. As we demonstrate in Sec.~\ref{sec:Applications}, this method offers great advantages for applications where reachable sets have to be computed for many different subsets $\widehat{\mathcal{X}}_0 \subseteq \mathcal{X}_0$, like \textit{safety falsification}, \textit{optimization over reachable sets}, and \textit{motion-primitive based control}. 

\begin{figure}
\begin{center}
	\includegraphics[width = 0.35 \textwidth]{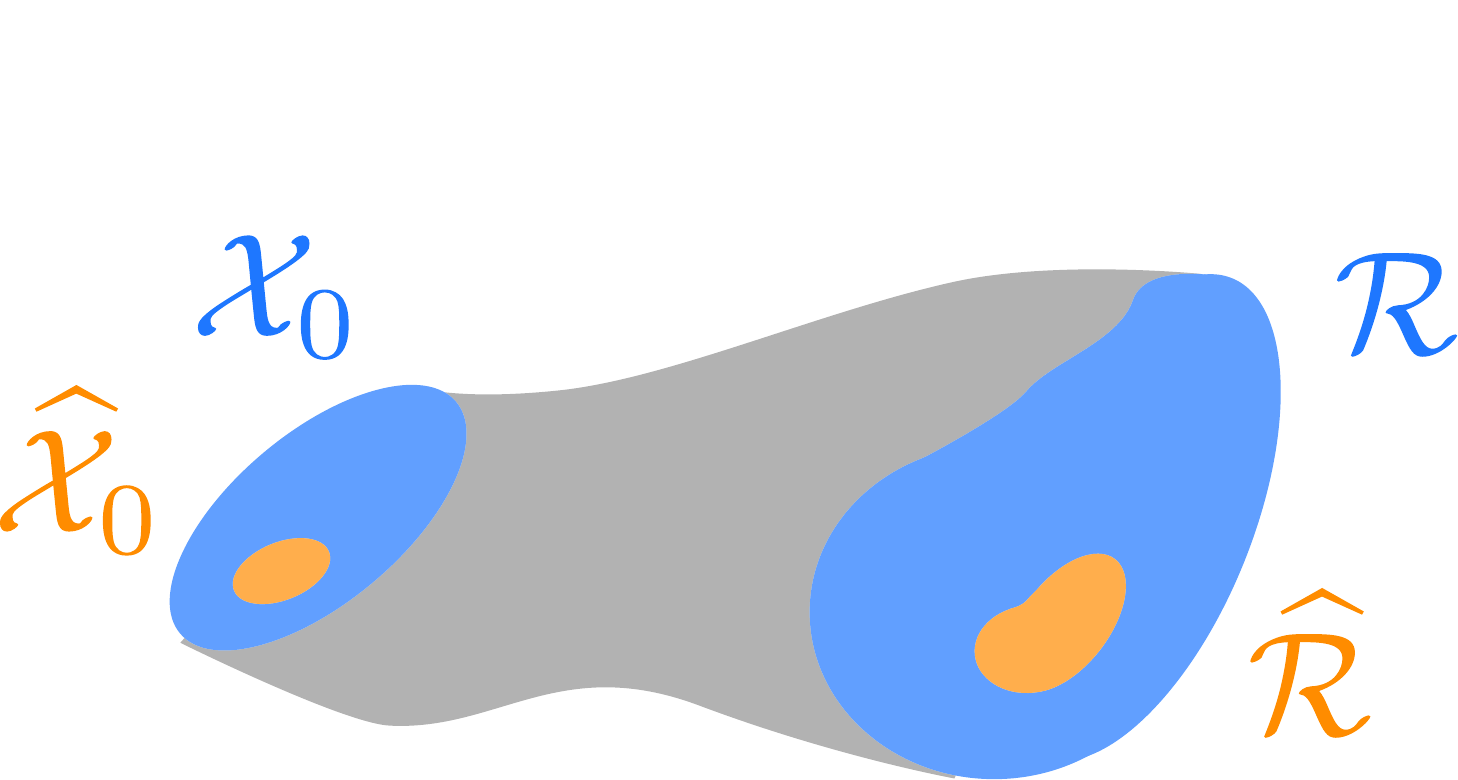}
	\caption{Given a reachable set $\mathcal{R}$ for a set of initial states $\mathcal{X}_0$ and a subset of initial states $\widehat{\mathcal{X}}_0$, we can obtain $\widehat{\mathcal{R}}$ without any reachability analysis.}
	\label{fig:VisBasicApproach}
	\end{center}
\end{figure}

\subsection{State of the Art}

Reachability algorithms for linear systems and hybrid systems with linear continuous dynamics are mostly based on the propagation of reachable sets. These algorithms use a large variety of convex set representations, like polytopes \cite{Frehse2011}, zonotopes \cite{Girard2006}, ellipsoids \cite{Kurzhanski2000}, support functions \cite{Girard2008b}, and star sets \cite{Duggirala2016}. Further, other approaches use simulations to compute reachable set \cite{Bak2017b,Duggirala2016}. The examples of tools for reachability analysis of linear systems are C2E2 \cite{Duggirala2015}, CORA \cite{Althoff2015a}, HyDRA \cite{Schupp2017}, Hylaa \cite{Bak2017}, Julia Reach \cite{Bogomolov2019a}, SpaceEx \cite{Frehse2011}, and XSpeed \cite{Ray2015}. 

The reachability algorithms for nonlinear systems can be categorized into four groups: invariant generation, optimization-based approaches, abstraction in solution space, and abstraction in state space. Since any invariant set which includes the initial set is also a reachable set, approaches for invariant generation can be used for reachability analysis \cite{Kong2017,Matringe2010,Liu2011}. The optimization-based approaches reformulate reachability analysis as an optimization problem \cite{Chutinan2003,Mitchell2005}. Thus, the approach in \cite{Chutinan2003} optimizes the outward translation of polytope halfspaces to obtain a flowpipe, whereas \cite{Mitchell2005} expresses reachability analysis with Hamilton-Jacobi equations. Other approaches abstract the solution space directly: The work in \cite{Duggirala2016} uses validated simulations for the construction of bounded flowpipes, and \cite{Prabhakar2011} approximates the solution of the ODE with Bernstein polynomials; Taylor models computed from iterations such as the Picard iteration were initially proposed in \cite{Hoefkens2001,Makino2009}, and later extended to include uncertain inputs \cite{Chen2012}. Approaches based on an abstraction of the state space compute simplified differential equations to which a compensating uncertainty is added. Often, nonlinear ODEs are abstracted by a hybrid automaton with linear dynamics \cite{Asarin2007}. Other methods linearize the nonlinear dynamics on-the-fly \cite{Dang2010,Althoff2008c}. Recent approaches extend this concept to the abstraction of the nonlinear dynamics by polynomials \cite{Althoff2013a}, which results in a tighter enclosure of the reachable set. Examples of tools for reachability analysis of nonlinear systems are Adriadne \cite{Benvenuti2014}, C2E2 \cite{Duggirala2015}, CORA \cite{Althoff2015a}, DynIbex \cite{Sandretto2016}, Flow* \cite{Chen2013}, Isabelle/HOL \cite{Immler2015}, and Julia Reach \cite{Bogomolov2019a}.

We show in this work how the relation between the initial and reachable states can be preserved during reachability analysis. For \textit{abstract interpretation}, the work in \cite{Goubault2015} demonstrates how the parameterization of zonotopes can be utilized to preserve relations between inputs and outputs of computer programs. In \cite{Chen2014}, the relation between the initial and reachable states is used to compute inner-approximations of reachable sets. Similarly, the approach in \cite{Chen2012} utilizes this relation to tighly enclose the intersections with guard sets for hybrid system reachability analysis. 

Our new method is not only applicable to reachability analysis but also falsification. If a computed reachable set violates a specification, falsification aims to provide the initial states and the input signals that lead to the violation. The problem of finding such initial states and input signals is known as \textit{safety falsification}. For linear systems with piecewise constant inputs, \cite{Bak2017b} extracts falsifying initial states and input signals from the computed reachable set by solving a linear program. A similar concept is used in \cite{Bogomolov2019}, where reachability analysis is utilized to efficiently determine falsifying trajectories for hybrid systems with linear continuous dynamics. The set of initial states resulting in a falsification can also be computed with backward reachability analysis using inner-approximations of reachable sets. Approaches to compute inner-approximations of reachable sets exist for linear systems \cite{Girard2006} as well as for nonlinear systems \cite{Goubault2019,Li2018a,Chen2014}. 

In this paper, we preserve the relation between the initial and the reachable states for nonlinear systems. We use polynomial zonotopes instead of Taylor models since the former is a generalization of Taylor models \cite[Prop.~4]{Kochdumper2019}. Our novel method allows us to obtain subsets within a pre-computed reachable set almost instantly. Using numerical examples, we demonstrate that our method reduces the computation time for \textit{safety falsification}, \textit{optimization over reachable sets}, and \textit{motion-primitive based control} significantly. To some extent this work proposes a novel method for unifying reachability analysis and falsification.

\subsection{Notation}
\label{subsec:notation}

Sets are denoted by calligraphic letters, matrices by uppercase letters, vectors by lowercase letters, and set operations by typewriter font (e.g., \operator{interval}). Given a vector $b \in \mathbb{R}^n$, $b_{(i)}$ refers to the $i$-th entry. Given a matrix $A \in \mathbb{R}^{n \times m}$, $A_{(i,\cdot)}$ represents the $i$-th matrix row, $A_{(\cdot,j)}$ the $j$-th column, and $A_{(i,j)}$ the $j$-th entry of matrix row $i$. The concatenation of two matrices $C$ and $D$ is denoted by $[C~D]$. The symbols $\mathbf{0}$ and $\mathbf{1}$ represent matrices of zeros and ones of proper dimension and the empty matrix is denoted by $[~]$. Left multiplication of a matrix $M \in \mathbb{R}^{m \times n}$ with a set $\mathcal{S} \subset \mathbb{R}^n$ is defined as $M \otimes \mathcal{S} = \big \{ M s ~ \big | ~ s \in \mathcal{S} \big \}$, the Minkowski addition of two sets $\mathcal{S}_1 \subset \mathbb{R}^n$ and $\mathcal{S}_2 \subset \mathbb{R}^n$ is defined as $\mathcal{S}_1 \oplus \mathcal{S}_2 = \big \{ s_1 + s_2 ~\big |~ s_1 \in \mathcal{S}_1, s_2 \in \mathcal{S}_2 \big\}$, and the Cartesian product of two sets $\mathcal{S}_1 \subset \mathbb{R}^n$ and $\mathcal{S}_2 \subset \mathbb{R}^m$ is defined as $\mathcal{S}_1 \times \mathcal{S}_2 = \big \{ [s_1~s_2]^T ~\big|~ s_1 \in \mathcal{S}_1, s_2 \in \mathcal{S}_2 \big\}$. Given two set operations \operator{A} and \operator{B}, and a set $\mathcal{S} \subset \Rn$, the composition of the set operations is denoted by $\operator{A}(\operator{B}(\mathcal{S})) = (\operator{A} \circ \operator{B})(\mathcal{S})$. The power set of a set $\mathcal{S} \subset \Rn$ is denoted by $2^{\mathcal{S}}$. We further introduce a n-dimensional interval as $\mathcal{I} := [l,u],~ \forall i ~ l_{(i)} \leq u_{(i)},~ l,u \in \mathbb{R}^n$. The unit hypercube $[-\mathbf{1},\mathbf{1}] \subset \R^p$ is denoted by $\mathcal{I}_p$. Given a center vector $c \in \Rn$ and a generator matrix $G \in \R^{n \times m}$, a zonotope is $\mathcal{Z} := \big \{ c + \sum_{i=1}^m \alpha_i ~G_{(\cdot,i)} ~\big |~ \alpha_i \in [-1,1] \big \}$. For a concise notation, we use the shorthand $\mathcal{Z} = \langle c,G \rangle_Z$.


\section{Set Representation} 
\label{sec:SetRepresentation}

\subsection{Parameterization}

\begin{table*}
\begin{center}
\caption{Characterization of set representations w.r.t. parameterization and dependency preservation of set operations.}
\label{tab:parameterization}
\begin{tabular}{ l c c c c c c}
 \toprule
 \multirow{2}{3.5cm}{\textbf{Set Representation}} & \textbf{Parameter-} & \multicolumn{4}{c}{\textbf{Dependency Preservation}} & \\ \cmidrule{3-7}
		& \textbf{ization} & \textbf{Linear Trans.} & \textbf{Minkowski Sum} & \textbf{Cart. Product} & \textbf{Quad. Map} \\ \midrule 
 Interval & $\times$ &  &  &  &  \\
 Zonotopes & $\surd$ & $\surd$ & $\surd$ & $\surd$ & $\surd$ \\ 
 Polytopes (H-Rep.) & $\times$ &  &  &  &  \\
 Polytopes (V-Rep.) & $\surd$ & $\surd$ & $\times$ & $\times$ & $\times$ \\
 Ellipsoids & $\surd$ & $\surd$ & $\times$ & $\times$ & $\times$ \\
 Support Functions & $\times$ &  &  &  &  \\
 Level Sets & $\times$ & & & & \\
 Star Sets & $\surd$ & $\surd$ & $\surd$ & $\surd$ & $\times$ \\
 Taylor Models & $\surd$ & $\surd$ & $\surd$ & $\surd$ & $\surd$ \\
 Polynomial Zonotopes & $\surd$ & $\surd$ & $\surd$ & $\surd$ & $\surd$ \\ 
 \bottomrule 
\end{tabular}
\end{center}
\end{table*}

A prerequisite for preserving relations between states is that the states inside the initial set are parameterized. As shown in Table~\ref{tab:parameterization}, not all set representations fulfill this requirement. We demonstrate this exemplary for the vertex (V-representation) and the halfspace-representation (H-representation) of a polytope; next are the definitions.

\begin{definition}
	(V-Representation) Given $m$ vertices $v_i \in \mathbb{R}^n$, the vertex representation of $\mathcal{P} \subset \mathbb{R}^n$ is defined as
	\begin{equation*}
		 \mathcal{P} := \bigg\{  \sum_{i=1}^{m} \delta_i v_i ~ \bigg |~ \delta_i \geq 0, ~ \sum_{i=1}^m \delta_i = 1 \bigg\}.
	\end{equation*}
	\label{def:Vrep}
\end{definition}

We use the shorthand $\mathcal{P} = \langle [v_1~\dots~v_m] \rangle_V$.

\begin{definition}
	(H-Representation) Given a matrix $A \in \mathbb{R}^{m \times n}$ and a vector $b \in \mathbb{R}^m$, the halfspace representation of $\mathcal{P} \subset \mathbb{R}^n$ is defined as 
	\begin{equation*}
		 \mathcal{P} := \big \{ x ~\big |~ A x \leq b \big \}.
	\end{equation*}
\end{definition}

Each point $p \in \mathcal{P}$ can be parameterized by specific values $\overline{\delta}_i$ when using the V-representation (not possible for the H-representation):
\begin{equation}
	p =  \sum_{i=1}^{m} \overline{\delta}_i v_i, ~~ \overline{\delta}_i \geq 0, ~ \sum_{i=1}^m \overline{\delta}_i = 1.
	\label{eq:paramVrep}
\end{equation}
In general, the above parameterization is not unique \cite{Schuermann2016a}. For parameterized sets, we introduce evaluation functions.

\begin{definition}
	(Evaluation Function) Given a set $\mathcal{S} \subset \Rn$ that is parameterized by the parameter vector $d \in \mathcal{D} \subset \R^m$, the evaluation function $\eval{\mathcal{S}}: \mathcal{D} \to 2^\mathcal{S}$ returns the set $\overline{\mathcal{S}}$ that corresponds to a specific value $\overline{d} \in \mathcal{D}$ of the parameter vector $d$:
	\begin{equation*}
		\eval{\mathcal{S}}(\overline{d}) = \overline{\mathcal{S}},
	\end{equation*}
	where the parameter domain $\mathcal{D} \subset \R^m$ satisfies 
	\begin{equation*}
		\bigcup_{d \in \mathcal{D}} \eval{\mathcal{S}}(d) = \mathcal{S}.
	\end{equation*}
\end{definition}

\begin{example}
For a polytope $\mathcal{P} = \langle [v_1~\dots~v_m] \rangle_V$ in V-representation, the parameter domain is
\begin{equation*}
	\mathcal{D} = \bigg\{ [\delta_1~\dots~\delta_m]^T ~ \bigg| ~ \delta_i \geq 0, ~ \sum_{i=1}^m \delta_i = 1 \bigg\}
\end{equation*}
and the evaluation function is
\begin{equation}
	\eval{\mathcal{P}}(\delta) = \bigg \{ \sum_{i=1}^{m} \delta_i v_i \bigg \}.
	\label{eq:evalFunPoly}
\end{equation}
\end{example}

\subsection{Dependency Preservation}

We require a set representation that preserves the relation between states for all relevant set operations. First, we demonstrate dependency preservation for linear maps of V-representations:
\begin{example}
	Given a scalar $M \in \R$ and a one-dimensional polytope $\mathcal{P} = \langle [v_1,v_2] \rangle_V$, its linear map is computed as 
	\begin{equation}
			M \otimes \mathcal{P} = \langle [M v_1~M v_2] \rangle_V.
			\label{eq:linMapPoly}
	\end{equation}
	Let us introduce the polytope $\mathcal{P} = \langle [-1~3] \rangle_V$, the point $p = 2 \in \mathcal{P}$, and the scalar $M = 2$. According to \eqref{eq:paramVrep}, the point $p \in \mathcal{P}$ can be parameterized by the values $\overline{\delta} = [\overline{\delta_1}~\overline{\delta_2}]^T = [0.25~0.75]^T$. The computation of the linear transformation according to \eqref{eq:linMapPoly} yields
	\begin{equation*}
		M \otimes \mathcal{P} = \langle [\widehat{v}_1~\widehat{v}_{2}] \rangle_V = \langle [-2~6] \rangle_{V}.
	\end{equation*}
	If we evaluate the result for $\overline{\delta}$ corresponding to the point $p$, we obtain
	\begin{equation*}
		\eval{M \otimes \mathcal{P}}(\overline{\delta}) \overset{\eqref{eq:evalFunPoly}}{=} \sum_{i=1}^2 \overline{\delta}_i \widehat{v}_i = 4 = Mp.
	\end{equation*}
	The implementation of the linear map in \eqref{eq:linMapPoly} is therefore dependency-preserving.
\end{example}

Next, we consider a quadratic map, which is not dependency-preserving for the V-representation:

\begin{example}
	Given a scalar $Q \in \R$ and a one-dimensional polytope $\mathcal{P} = \langle [v_1~v_2] \rangle_V$, its quadratic map is computed as
	\begin{equation}
		\begin{split}
			& \operator{sq}(Q,\mathcal{P}) = \big \{ x^T Q x ~\big|~ x \in \mathcal{P} \big\} \\
			& ~ = \begin{cases} \left[ 0~~\max(|v_1|,|v_2|)^2 Q\right], & 0 \in \mathcal{P} \\ \left[ \min(v_1^2,v_2^2) Q~~ \max(v_1^2,v_2^2) Q \right], & \mathrm{otherwise} \end{cases}.
		\end{split}
		\label{eq:quadMapPoly}
	\end{equation} 
	Let us introduce the polytope $\mathcal{P} = \langle [-1~3] \rangle_V$, the point $p = 2 \in \mathcal{P}$, and the scalar $Q = 2$. According to \eqref{eq:paramVrep}, the point $p \in \mathcal{P}$ can be parameterized by the values $\overline{\delta} = [\overline{\delta_1}~\overline{\delta_2}]^T = [0.25~0.75]^T$. Computation of the quadratic map according to \eqref{eq:quadMapPoly} yields
	\begin{equation*}
		sq(Q,\mathcal{P}) = \langle [\widehat{v}_1~\widehat{v}_{2}] \rangle_V  = \langle [0~18] \rangle_V.
	\end{equation*} 
	If we evaluate the computed quadratic map for $\overline{\delta}$ corresponding to the point $p$, we obtain
	\begin{equation*}
		\eval{\operator{sq}(Q,\mathcal{P})}(\overline{\delta}) \overset{\eqref{eq:evalFunPoly}}{=} \sum_{i=1}^2 \overline{\delta}_i \widehat{v}_i = 13.5 \neq \operator{sq}(Q,p) = 8,
	\end{equation*}
	which is not identical to $\operator{sq}(Q,p)$. The above implementation is therefore not dependency-preserving.
\end{example}

Let us finally define dependency preservation:
\begin{definition}
	(Dependency Preservation) Given an implementation of a set operation \operator{A} and a set $\mathcal{S} \subset \Rn$ parameterized by $d \in \mathcal{D} \subset \R^m$, we call the implementation of \operator{A} dependency-preserving if
	\begin{equation}
			\forall d \in \mathcal{D}: ~~ \operator{A} \left( \eval{\mathcal{S}}(d) \right) \subseteq \eval{\operator{A}(\mathcal{S})}(d).
		\label{eq:depPreserving}
	\end{equation}
	\label{def:depPreserving}
\end{definition}

Table~\ref{tab:parameterization} shows a summary of the set operations that are dependency-preserving for different set representations.

\subsection{Set Operation Properties}

We introduce some additional properties for set operations which we require for later derivations. Given two sets $\mathcal{S}_1,\mathcal{S}_2 \subset \R^n$ with $\mathcal{S}_1 \subseteq \mathcal{S}_2$, all unary set operations \operator{A} used in this work satisfy 
\begin{equation}
  \operator{A}(\mathcal{S}_1) \subseteq \operator{A}(\mathcal{S}_2).
  \label{eq:unary}
\end{equation}
Furthermore, given sets $\mathcal{S}_1,\mathcal{S}_2 \subset \R^n$ and $\mathcal{S}_3,\mathcal{S}_4 \subset \R^m$ with $\mathcal{S}_1 \subseteq \mathcal{S}_2$ and $\mathcal{S}_3 \subseteq \mathcal{S}_4$, all binary set operations \operator{B} used in this work satisfy 
\begin{equation}
  \operator{B}(\mathcal{S}_1,\mathcal{S}_3) \subseteq \operator{B}(\mathcal{S}_2,\mathcal{S}_4).
  \label{eq:binary}
\end{equation}
The properties \eqref{eq:unary} and \eqref{eq:binary} equivalently hold for the composition of set operations:
\begin{lemma}
	Given two set operations \operator{A} and \operator{B} that satisfy \eqref{eq:unary}, the composition $\operator{A} \circ \operator{B}$ also satisfies \eqref{eq:unary}.
	\label{lemma:conseqSubset}
\end{lemma}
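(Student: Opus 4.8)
The plan is to prove Lemma~\ref{lemma:conseqSubset} directly from the definition of composition together with the monotonicity property~\eqref{eq:unary} applied twice. Let $\operator{A}$ and $\operator{B}$ both satisfy~\eqref{eq:unary}, i.e.\ each maps subsets to subsets. Take any two sets $\mathcal{S}_1, \mathcal{S}_2 \subset \R^n$ with $\mathcal{S}_1 \subseteq \mathcal{S}_2$. We must show $(\operator{A} \circ \operator{B})(\mathcal{S}_1) \subseteq (\operator{A} \circ \operator{B})(\mathcal{S}_2)$.

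First I would apply~\eqref{eq:unary} to $\operator{B}$: since $\mathcal{S}_1 \subseteq \mathcal{S}_2$, we get $\operator{B}(\mathcal{S}_1) \subseteq \operator{B}(\mathcal{S}_2)$. Denote these two sets by $\mathcal{T}_1 := \operator{B}(\mathcal{S}_1)$ and $\mathcal{T}_2 := \operator{B}(\mathcal{S}_2)$, so that $\mathcal{T}_1 \subseteq \mathcal{T}_2$. Next I would apply~\eqref{eq:unary} to $\operator{A}$ with the pair $\mathcal{T}_1 \subseteq \mathcal{T}_2$, yielding $\operator{A}(\mathcal{T}_1) \subseteq \operator{A}(\mathcal{T}_2)$. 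Unfolding the abbreviations and using $\operator{A}(\operator{B}(\mathcal{S})) = (\operator{A} \circ \operator{B})(\mathcal{S})$ from the notation section gives exactly $(\operator{A} \circ \operator{B})(\mathcal{S}_1) \subseteq (\operator{A} \circ \operator{B})(\mathcal{S}_2)$, which is property~\eqref{eq:unary} for the composition. A one-line chain
\begin{equation*}
  (\operator{A} \circ \operator{B})(\mathcal{S}_1) = \operator{A}\big(\operator{B}(\mathcal{S}_1)\big) \subseteq \operator{A}\big(\operator{B}(\mathcal{S}_2)\big) = (\operator{A} \circ \operator{B})(\mathcal{S}_2)
\end{equation*}
captures the whole argument, with the first inclusion step justified by~\eqref{eq:unary} for $\operator{B}$ (inside) and then for $\operator{A}$ (outside).

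Honestly, there is no real obstacle here: the statement is a routine transitivity-of-monotonicity fact, and the only thing to be careful about is dimension bookkeeping — $\operator{B}$ may change the ambient dimension (e.g.\ from $\R^n$ to $\R^k$), so one should note that $\operator{A}$ is then applied to subsets of $\R^k$ and that~\eqref{eq:unary} is being invoked for $\operator{A}$ at that dimension. One might also remark that an entirely analogous two-step argument, invoking~\eqref{eq:binary} instead of~\eqref{eq:unary}, shows that compositions involving binary operations inherit property~\eqref{eq:binary} as well, so the lemma extends to mixed compositions; but for the stated claim the two applications of~\eqref{eq:unary} above suffice.
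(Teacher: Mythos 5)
Your proof is correct and follows exactly the same route as the paper's: apply \eqref{eq:unary} once to $\operator{B}$ to get $\operator{B}(\mathcal{S}_1) \subseteq \operator{B}(\mathcal{S}_2)$, then once more to $\operator{A}$, yielding the same one-line chain of inclusions the authors give. Your added remarks on dimension bookkeeping and the analogous extension to binary operations are consistent with the paper, which notes the latter immediately after the lemma.
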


\begin{proof}
	Since \operator{A} and \operator{B} satisfy \eqref{eq:unary}, it holds for two sets $\mathcal{S}_1,\mathcal{S}_2 \subset \R^n$ with $\mathcal{S}_1 \subseteq \mathcal{S}_2$ that
	\begin{equation*}
		(\operator{A} \circ \operator{B})(\mathcal{S}_1) = \operator{A}(\underbrace{\operator{B}(\mathcal{S}_1)}_{\overset{\eqref{eq:unary}}{\subseteq} \operator{B}(\mathcal{S}_2)}) \overset{\eqref{eq:unary}}{\subseteq} \operator{A}(\operator{B}(\mathcal{S}_2)) = (\operator{A} \circ \operator{B})(\mathcal{S}_2).
	\end{equation*}
	\hfill ~
\end{proof}
The result of Lemma~\ref{lemma:conseqSubset} equally holds for compositions involving binary set operations that satisfy \eqref{eq:binary}. Next, we show that the composition of two dependency-preserving set operations is dependency-preserving:
\begin{lemma}
	The composition $\operator{A} \circ \operator{B}$ of two dependency-\linebreak[4]preserving operations \operator{A} and \operator{B} is dependency-preserving as well.
	\begin{equation}
			\forall d \in \mathcal{D}: ~~ (\operator{A} \circ \operator{B}) \left( \eval{\mathcal{S}}(d) \right) \subseteq \eval{(\operator{A}\circ \operator{B})(\mathcal{S})}(d).
			\label{eq:lemmaComp}
	\end{equation}
	\label{lemma:conseqExecution}
\end{lemma}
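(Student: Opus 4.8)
The plan is to chain three facts in sequence: the dependency-preservation inequality \eqref{eq:depPreserving} for $\operator{B}$, the monotonicity property \eqref{eq:unary} of $\operator{A}$, and the dependency-preservation inequality for $\operator{A}$, now instantiated on the set $\operator{B}(\mathcal{S})$ rather than on $\mathcal{S}$. Concretely, fix an arbitrary $d \in \mathcal{D}$ and rewrite the left-hand side of \eqref{eq:lemmaComp} as $\operator{A}\big(\operator{B}(\eval{\mathcal{S}}(d))\big)$ using the definition of composition. Since $\operator{B}$ is dependency-preserving, \eqref{eq:depPreserving} gives $\operator{B}\big(\eval{\mathcal{S}}(d)\big) \subseteq \eval{\operator{B}(\mathcal{S})}(d)$.

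Next I would apply $\operator{A}$ to both sides of this inclusion. Because $\operator{A}$ is one of the unary set operations used in the paper, it satisfies \eqref{eq:unary}, hence $\operator{A}\big(\operator{B}(\eval{\mathcal{S}}(d))\big) \subseteq \operator{A}\big(\eval{\operator{B}(\mathcal{S})}(d)\big)$. Now $\operator{B}(\mathcal{S})$ is again a parameterized set over the same parameter domain $\mathcal{D}$ with evaluation function $\eval{\operator{B}(\mathcal{S})}$ — this is exactly what it means for $\operator{B}$ to be dependency-preserving — so applying \eqref{eq:depPreserving} for $\operator{A}$ to the set $\operator{B}(\mathcal{S})$ and the same $d$ yields $\operator{A}\big(\eval{\operator{B}(\mathcal{S})}(d)\big) \subseteq \eval{\operator{A}(\operator{B}(\mathcal{S}))}(d) = \eval{(\operator{A}\circ\operator{B})(\mathcal{S})}(d)$. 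Concatenating the three inclusions and recalling that $d \in \mathcal{D}$ was arbitrary establishes \eqref{eq:lemmaComp}.

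The only subtle point is the middle step: invoking dependency preservation of $\operator{A}$ on $\operator{B}(\mathcal{S})$ presupposes that $\operator{B}(\mathcal{S})$ carries a parameterization over the same domain $\mathcal{D}$ and that its evaluation function is precisely the $\eval{\operator{B}(\mathcal{S})}$ appearing in the derivation; this is consistent with the framework but deserves an explicit remark. I also rely on $\operator{A}$ satisfying \eqref{eq:unary}, which holds for all unary operations considered here. For the announced extension to compositions involving binary operations, the same three-step argument goes through verbatim, using \eqref{eq:binary} in place of \eqref{eq:unary} and applying \eqref{eq:depPreserving} to each argument in turn; no new idea is required.
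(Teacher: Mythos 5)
Your proposal is correct and follows essentially the same three-step chain as the paper's own proof: apply dependency preservation of $\operator{B}$, then monotonicity \eqref{eq:unary} of $\operator{A}$, then dependency preservation of $\operator{A}$ on the parameterized set $\operator{B}(\mathcal{S})$. The subtle point you flag (that $\operator{B}(\mathcal{S})$ inherits a parameterization over the same domain $\mathcal{D}$) is left implicit in the paper but does not change the argument.
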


\begin{proof}
	Since \operator{B} is dependency-preserving, it holds according to \eqref{eq:depPreserving} that 
	\begin{equation}
			\forall d \in \mathcal{D}: ~~ \operator{B} \left( \eval{\mathcal{S}}(d) \right) \subseteq \eval{\operator{B}(\mathcal{S})}(d).
			\label{eq:lemmaComp1}
	\end{equation}
	Then, using \eqref{eq:lemmaComp1} and \eqref{eq:unary}, it holds that 
	\begin{equation}
		\forall d \in \mathcal{D}: ~~ \operator{A} \left( \operator{B} \left( \eval{\mathcal{S}}(d) \right) \right) \subseteq \operator{A} ( \eval{\operator{B}(\mathcal{S})}(d) ).
		\label{eq:lemmaComp2}
	\end{equation}
	Since \operator{A} is dependency-preserving, it holds according to \eqref{eq:depPreserving} that 
	\begin{equation*}
	\begin{split}
		\forall d \in \mathcal{D}: ~~ & (\operator{A} \circ \operator{B}) ( \eval{\mathcal{S}}(d)) = \operator{A} \left( \operator{B} \left( \eval{\mathcal{S}}(d) \right) \right) \overset{\eqref{eq:lemmaComp2}}{\subseteq} \operator{A} \big( \eval{ \operator{B}(\mathcal{S})}(d) \big) \\
		& \overset{\eqref{eq:depPreserving}}{\subseteq} \eval{\operator{A}(\operator{B}(\mathcal{S}))}(d) = \eval{ (\operator{A} \circ \operator{B})(\mathcal{S})}(d),
	\end{split}	
	\end{equation*}
	which is identical to \eqref{eq:lemmaComp}. \hfill ~
\end{proof}

Since the reachability algorithm used in this work applies dep-endency-preserving set operations only, is follows that the algorithm is dependency-preserving too, as discussed in Sec.~\ref{sec:ReachabilityAnalysis}.

\subsection{Polynomial Zonotopes}

The concept presented in this work requires a parameterized set representation for which all operations used by the reachability algorithm are dependency-preserving. As shown in Table~\ref{tab:parameterization}, a non-convex set representation that satisfies these requirements are polynomial zonotopes. In this work we use their sparse representation \cite{Kochdumper2019}: 
\begin{definition}
(Polynomial Zonotope) Given a generator matrix of dependent generators $G \in \mathbb{R}^{n \times h}$, a generator matrix of independent generators $G_I \in \mathbb{R}^{n \times q}$, and an exponent matrix $E \in \mathbb{Z}_{\ge 0}^{m \times h}$, a polynomial zonotope is defined as  
  \begin{equation}
  	\begin{split}
    \mathcal{PZ} := \bigg\{ & \underbrace{ \sum _{i=1}^h \bigg( \prod _{k=1}^m \alpha _k ^{E_{(k,i)}} \bigg) G_{(\cdot,i)} }_{f_{G,E}(\alpha)} + \sum _{j=1}^{q} \beta _j G_{I(\cdot,j)}~ \bigg| \\ & \alpha_k, \beta_j \in [-1,1] \bigg\}, 
    \end{split}
  \label{eq:polyZonotope}
  \end{equation}
  where $f_{G,E}: \R^m \to \Rn$ is a multivariate polynomial function, and $\alpha = [\alpha_1~\dots~\alpha_m]^T \in \R^m$.
  \label{def:polynomialZonotope}
\end{definition}

The scalars $\alpha_k$ are called dependent factors and the scalars $\beta_j$ independent factors. Consequently, the term $f_{G,E}(\alpha)$ is called the dependent part, and the term $\sum _{j=1}^{q} \beta _j G_{I(\cdot,j)}$ is called the independent part. We introduce the shorthand $\mathcal{PZ} = \langle G,G_I,E \rangle_{PZ}$. 

Every polynomial zonotope can equivalently be represented by a polynomial zonotope without independent generators:
\begin{proposition}
		Given a polynomial zonotope $\mathcal{PZ} = \langle G,G_I,E \rangle_{PZ}$, $\mathcal{PZ}$ can be equivalently represented without independent generators as follows:
		\begin{equation}
			\mathcal{PZ} = \langle G,G_I,E \rangle_{PZ} = \bigg \langle [G~G_I],[~], \begin{bmatrix} E & \mathbf{0} \\ \mathbf{0} & I_q \end{bmatrix} \bigg \rangle_{PZ},
			\label{eq:remIndepGen}
		\end{equation}		
		where $I_q \in \R^{q \times q}$ is the identity matrix.
		\label{prop:remIndepGen}
\end{proposition}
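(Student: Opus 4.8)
The plan is to show set equality by establishing the two inclusions, or more directly by exhibiting an explicit bijection-like correspondence between the parameterizations of the left-hand and right-hand polynomial zonotopes. Let me denote the right-hand side by $\mathcal{PZ}'=\langle [G~G_I],[~], \left[\begin{smallmatrix} E & \mathbf{0} \\ \mathbf{0} & I_q \end{smallmatrix}\right] \rangle_{PZ}$. This is a polynomial zonotope in $m+q$ dependent factors, which I will call $\alpha_1,\dots,\alpha_m$ (matching the original) and $\gamma_1,\dots,\gamma_q$ (newly introduced), and with no independent factors.

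First I would write out $\mathcal{PZ}'$ explicitly using Definition~\ref{def:polynomialZonotope}. Its generators are the columns $G_{(\cdot,i)}$ for $i=1,\dots,h$ together with the columns $G_{I(\cdot,j)}$ for $j=1,\dots,q$. The exponent matrix has the block-diagonal form, so for the $i$-th original generator ($i\le h$) the monomial is $\prod_{k=1}^m \alpha_k^{E_{(k,i)}} \cdot \prod_{j=1}^q \gamma_j^{0} = \prod_{k=1}^m \alpha_k^{E_{(k,i)}}$, exactly the monomial appearing in $f_{G,E}(\alpha)$; and for the $j$-th appended generator the monomial is $\prod_{k=1}^m \alpha_k^0 \cdot \prod_{\ell=1}^q \gamma_\ell^{(I_q)_{(\ell,j)}} = \gamma_j$. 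Hence
\begin{equation*}
	\mathcal{PZ}' = \bigg\{ \sum_{i=1}^h \bigg(\prod_{k=1}^m \alpha_k^{E_{(k,i)}}\bigg) G_{(\cdot,i)} + \sum_{j=1}^q \gamma_j\, G_{I(\cdot,j)} ~\bigg|~ \alpha_k,\gamma_j \in [-1,1] \bigg\}.
\end{equation*}
Comparing with \eqref{eq:polyZonotope}, this is term-for-term identical to the definition of $\mathcal{PZ}$, with the independent factors $\beta_j$ simply renamed to dependent factors $\gamma_j$ ranging over the same domain $[-1,1]$. Therefore every point of $\mathcal{PZ}$ is obtained for some choice of $(\alpha,\gamma)$ and vice versa, giving $\mathcal{PZ}=\mathcal{PZ}'$.

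The only genuinely substantive point — and the one I would state carefully rather than the trivial algebra — is why the independent factors of $\mathcal{PZ}$ may be relabelled as dependent factors without changing the represented set. This is because in Definition~\ref{def:polynomialZonotope} the independent and dependent factors range over exactly the same box $[-1,1]$, and the only semantic difference between the two kinds of factors is whether they may be reused (shared) across generators in subsequent operations; for the purpose of describing the set itself, a dependent factor that appears in exactly one generator with exponent $1$ behaves identically to an independent generator. Since the appended block $I_q$ ensures each new factor $\gamma_j$ appears in precisely one generator with exponent one, the point sets coincide. I do not anticipate any real obstacle here; the main thing to get right is the bookkeeping of the block exponent matrix so that the monomials attached to the original and appended generators come out as claimed.
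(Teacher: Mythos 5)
Your proposal is correct and follows essentially the same route as the paper, which simply substitutes the independent factors $\beta_j$ by additional dependent factors $\alpha_{m+1}=\beta_1,\dots,\alpha_{m+q}=\beta_q$; you merely spell out the bookkeeping of the block exponent matrix that the paper leaves implicit. No gaps.
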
 

\begin{proof}
	The result in \eqref{eq:remIndepGen} follows directly from the substitution of the independent factors $\beta_j$ in the definition of polynomial zonotopes in \eqref{eq:polyZonotope} with additional dependent factors $\alpha_{m+1} = \beta_1,\dots,\alpha_{p+q} = \beta_q$. \hfill ~
\end{proof}

Despite the result of Prop.~\ref{prop:remIndepGen}, the independent part of the polynomial zonotope is crucial for order reduction \cite{Kochdumper2019}. For polynomial zonotopes, the points inside the set are parameterized by both the dependent factors $\alpha_k$ and the independent factors $\beta_j$. Using Prop.~\ref{prop:remIndepGen}, we assume without loss of generality that the initial set has no independent generators. Consequently, it is sufficient to choose the parameter vector as $d = \alpha = [\alpha_1~\dots~\alpha_m]^T$, the parameter domain as $\mathcal{D} = \mathcal{I}_m$, and the evaluation function as
\begin{equation}
	\begin{split}
	\eval{\mathcal{PZ}}(\alpha) &= f_{G,E}(\alpha) \oplus \bigg\{ \sum _{j=1}^{q} \beta _j G_{I(\cdot,j)} ~\bigg|~  \beta_j \in [-1,1] \bigg\} \\
	&= \langle f_{G,E}(\alpha), G_I \rangle_Z.
	\end{split}
	\label{eq:evalFunPolyZono}
\end{equation} 

However, finding a parameterization for a point inside a polynomial zonotope can be computational expensive. For reachability analysis, the initial set is often an axis-aligned box, for which the parameterization is unique and trivial to compute as shown by the following example:
\begin{example}
	We consider the polynomial zonotope
	\begin{equation*}
		\mathcal{PZ} = \bigg \{ \underbrace{ \begin{bmatrix} 1 \\ 0 \end{bmatrix} \alpha_1 + \begin{bmatrix} 0 \\ 1 \end{bmatrix} \alpha_2 }_{f_{G,E}(\alpha)}~ \bigg | ~ \alpha_1,\alpha_2 \in [-1,1] \bigg \}
	\end{equation*}
	and the point $p = [0.5~0.4]^T$. Trivially, the point $p$ can be parameterized with $\overline{\alpha} = [0.5~0.4]^T$, so that $p = f_{G,E}(\overline{\alpha})$.
	\label{ex:polyZonotope}
\end{example}

Next, we integrate parameterized and dependency-preserving set representations in reachability analysis to obtain reachable subsets.

\begin{figure}[h]
\begin{center}
	\includegraphics[width = 0.45 \textwidth]{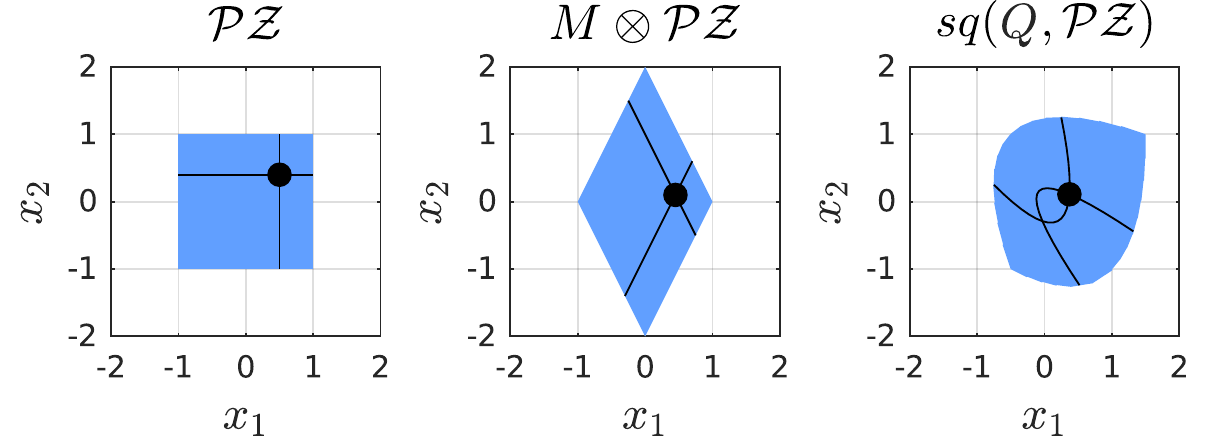}
	\caption{Sets resulting from the application of different set operations to the polynomial zonotope $\mathbf{\mathcal{PZ}}$ from Example~\ref{ex:polyZonotope}. The point $\mathbf{p}$ and its transformations are depicted by the black dots. Black lines correspond to the points where $\mathbf{\alpha_1 = 0.5 = const.}$ and $\mathbf{\alpha_2 = 0.4 = const.}$~.}
	\label{fig:exampleSetOps}
	\end{center}
\end{figure}


\section{Reachability Analysis}
\label{sec:ReachabilityAnalysis}

We first recall some preliminaries for reachability analysis, followed by our novel algorithm for computing reachable subsets.

\subsection{Preliminaries}

In this paper we consider nonlinear systems of the form
\begin{equation}
	\dot x (t) = f(x(t),u(t)), ~ x(t) \in \mathbb{R}^n,~ u(t) \in \mathbb{R}^m,
	\label{eq:system}
\end{equation}
where $x$ is the state vector and $u$ is the input vector. The reachable set is defined as follows:
\begin{definition}
	(Reachable Set) Let $\xi(t,x_0,u(\cdot))$ denote the solution to \eqref{eq:system} for an initial state $x(0) = x_0$ and the input trajectory $u(\cdot)$. The reachable set for an initial set $\mathcal{X}_0 \subset \mathbb{R}^n$ and a set of possible input values $\mathcal{U} \subset \mathbb{R}^m$ is 
	\begin{equation*}
		\mathcal{R}_{\mathcal{X}_0}^e(t) := \big\{ \xi(t,x_0,u(\cdot)) ~\big |~ x_0 \in \mathcal{X}_0, \forall \tau \in [0,t]~ u(\tau) \in \mathcal{U} \big \}.
	\end{equation*} 
\end{definition}
The superscript $e$ on $\mathcal{R}_{\mathcal{X}_0}^e(t)$ denotes the exact reachable set, which cannot be computed for general nonlinear systems. Therefore, we compute a tight over-approximation $\mathcal{R}(t) \supseteq \mathcal{R}_{\mathcal{X}_0}^e(t)$ with the operation \operator{reach} defined by Alg.~\ref{alg:reach}, which is taken from \cite{Kochdumper2019}. Alg.~\ref{alg:reach} is based on the abstraction of the nonlinear function $f(\cdot)$ by a Taylor expansion of order $\kappa$:
\begin{equation*}
	\begin{split}
		\dot x_{(i)} & = f_{(i)}(z(t)) \\
		& \in \left. \sum_{j=0}^{\kappa} \frac{\left((z(t)-z^*)^T \nabla \right)^j f_{(i)}(\tilde z)}{j!} \right|_{\tilde{z} = z^*} \oplus \mathcal{L}_{(i)}(t),
	\end{split}
\end{equation*} 
where  we introduce the shorthand $z = [x^T~u^T]^T$ and the Nabla operator $\nabla = \sum_{i=1}^{n+m} e_i \frac{\partial}{\partial z_{(i)}}$, with $e_i$ being orthogonal unit vectors. The set $\mathcal{L}_{(i)}(t)$ is the Lagrange remainder, defined in \cite[Eq. (2)]{Althoff2013a}, and the vector $z^* \in \mathbb{R}^{n+m}$ is the expansion point for the Taylor series. 

Within Alg.~\ref{alg:reach}, Alg.~\ref{alg:linError} is executed in line~\ref{line:linErrorAlg} to compute the set of abstraction errors. Alg.~\ref{alg:reach} and Alg.~\ref{alg:linError} require the following set operations: \operator{zono} returns an enclosing zonotope, $\boxplus$ denotes the exact addition as defined in \cite[Prop. 9]{Kochdumper2019}, \operator{reduce} denotes the order reduction of a polynomial zonotope (see \cite[Prop. 16]{Kochdumper2019}), and \operator{enlarge} enlarges a set by a given scalar factor $\lambda$. The operation \operator{taylor} returns the matrices $w,A,B,D,E$ storing the coefficients of the Taylor series evaluation of the nonlinear function $f(\cdot)$ at the expansion point $z^*$. The definitions of the operations $\mathcal{R}^{p,\Delta}$ \cite[Eq.~(9)]{Althoff2013a}, $\operator{post}^{\Delta}$ \cite[Sec. 4.1]{Althoff2013a}, \operator{varInputs} \cite[Sec. 4.2]{Althoff2013a}, and \operator{lagrangeRemainder} \cite[Sec. 4.1]{Althoff2013a} are identical to the ones in \cite{Althoff2013a}. The implementations of all required set operations for polynomial zonotopes are dependency-preserving (see Def.~\ref{def:depPreserving}), as shown in Table 1. Exact addition and order reduction are not included in Table~\ref{tab:parameterization} since they do not apply to all set representations. A visualization of dependency preservation is shown in Fig.~\ref{fig:exampleSetOps} for some numerical examples.

\begin{algorithm}[h!tb]
	\caption{$\mathcal{R}(t_f) = \operator{reach}(\mathcal{X}_0,\Psi(\tau_s),z_s^*)$} \label{alg:reach}
	{\raggedright \textbf{Require:} Initial set $\mathcal{X}_0$, sets of initial abstraction errors $\Psi(\tau_s)$, expansion points $z_s^*$, time horizon $t_f$, time step $r$, input set $\mathcal{U}$ represented as a zonotope, factor $\lambda$,  desired zonotope order $\rho_d$.
	
	\textbf{Ensure:} Final reachable set $\mathcal{R}(t_f)$.
	
	}
	\begin{algorithmic}[1]
		\State $t_0 = 0,~ s = 0,~ \mathcal{R}(0) = \mathcal{X}_0,~ \mathcal{U}^{\Delta} = \mathcal{U} \oplus (-u^c)$
		\While{$t_s < t_f$}
			\State $w, A, B ,D ,E \leftarrow \operator{taylor}(z_s^*)$ \label{line:taylorSeries}
			\State $\mathcal{V}(t_s) = w \oplus B u^c \oplus \frac{1}{2} \operator{sq}(D,R(t_s) \times \mathcal{U})$ \label{line:quadMap}
			\Repeat \label{line:beginRepeat}
				\State $\overline{\Psi} (\tau _s) = $~\texttt{enlarge}$(\Psi(\tau_s),\lambda)$ \label{line:enlarge}
				\State $\Psi^{\Delta}(\tau_s) = \operator{abstrErr}(\mathcal{R}(t_s),\overline{\Psi}(\tau_s))$ \label{line:linErrorAlg}
				\State $\Psi(\tau_s) = \mathcal{V}(t_s) \oplus \Psi^{\Delta}(\tau_s)$ \label{line:linError}
				\hspace{8em}\smash{$\left.\rule{0pt}{6\baselineskip}\right\}\ \rotatebox[origin=c]{90}{$\mathcal{R}(t_{s+1}) = \operator{post}(\mathcal{R}(t_s),\Psi(\tau_s),z_s^*)$}$}
			\Until{$\Psi(\tau_s) \subseteq \overline{\Psi} (\tau _s)$} \label{line:endRepeat}
			\State $\mathcal{R}^{\Delta}(\tau_s) = \mathcal{R}^{p,\Delta}(\Psi^{\Delta}(\tau_s),r)$ \label{line:Rdelta}
			\State $\widehat{\mathcal{R}}(t_{s+1}) = \left( e^{Ar} \otimes \mathcal{R}(t_s) \boxplus \Gamma(r) \otimes \mathcal{V}(t_s) \right)$ \label{line:post}
			\Statex[4]$~\oplus \mathcal{R}^{\Delta}(\tau_s)$ 
			\State $\mathcal{R}(t_{s+1}) = \operator{reduce}(\widehat{\mathcal{R}}(t_{s+1}),\rho_d)$ \label{line:reduce}
			\State $t_{s+1} = t_s + r, ~ s:= s+ 1$
		\EndWhile
	\end{algorithmic}
\end{algorithm}

\begin{algorithm}[h!tb]
	\caption{$\Psi^{\Delta}(\tau_s) = \operator{abstrErr}(\mathcal{R}(t_s),\overline{\Psi}(\tau_s))$} \label{alg:linError}
	{\raggedright \textbf{Require:} Reachable set $\mathcal{R}(t_s)$, set of applied abstr. errors $\overline{\Psi}(\tau_s)$.
		
	\textbf{Ensure:} Set of abstraction errors $\Psi^{\Delta}(\tau_s)$.
	
	}
	\begin{algorithmic}[1]
			\State $\mathcal{Z}_z(t_s) = \operator{zono}(\mathcal{R}(t_s)) \times \mathcal{U}$ \label{line:zono}
				\State $\mathcal{R}_z^{\Delta}(\tau_s) = $~\texttt{post}$^{\Delta}(\mathcal{R}(t_s),\overline{\Psi} (\tau _s),A) \times \mathcal{U}$ \label{line:postDelta1}
				\State $\mathcal{R}_z^{\Delta}(\tau_s) = \operator{zono}(\mathcal{R}_z^{\Delta}(\tau_s))$ \label{line:postDelta2}
				\State $\mathcal{V}^{\Delta}(\tau_s) = $~\texttt{varInputs}$(\mathcal{Z}_z(t_s),\mathcal{R}_z^{\Delta}(\tau_s),\mathcal{U}^{\Delta},B,D)$ \label{line:varInput}
				\State $\mathcal{R}(\tau_s) = \mathcal{R}(t_s) \oplus \mathcal{R}_z^{\Delta}(\tau_s)$ \label{line:reachSetInt}
				\State $\mathcal{L}(\tau_s) = $~\texttt{lagrangeRemainder}$(\mathcal{R}(\tau_s),E,z^*)$ \label{line:lagRem}
				\State $\Psi^{\Delta}(\tau_s) =  \mathcal{V}^{\Delta}(\tau_s) \oplus \mathcal{L}(\tau_s)$ \label{line:linErrorSet}
	\end{algorithmic}
\end{algorithm}

The accuracy of the obtained reachable sets is almost entirely determined by the reachable sets $\mathcal{R}(t_s)$ at points in time $t_s$ since only these sets are propagated \cite{Althoff2013a}, while the reachable sets of time intervals $\mathcal{R}(\tau_s)$ only fill the time gaps. Consequently, subsequent derivations focus on the reachable sets at points in time $\mathcal{R}(t_s)$.

\subsection{Reachable Subsets}

In this section we show how to efficiently obtain reachable subsets within a pre-computed reachable set as presented in Fig.~\ref{fig:ExampleReach}. The main idea is illustrated in Example~\ref{ex:reach}, followed by Lemmas leading to the main result in Theorem~\ref{theo:main}.

\begin{figure}
\begin{center}
	\setlength{\belowcaptionskip}{-10pt}
	\includegraphics[width = 0.45 \textwidth]{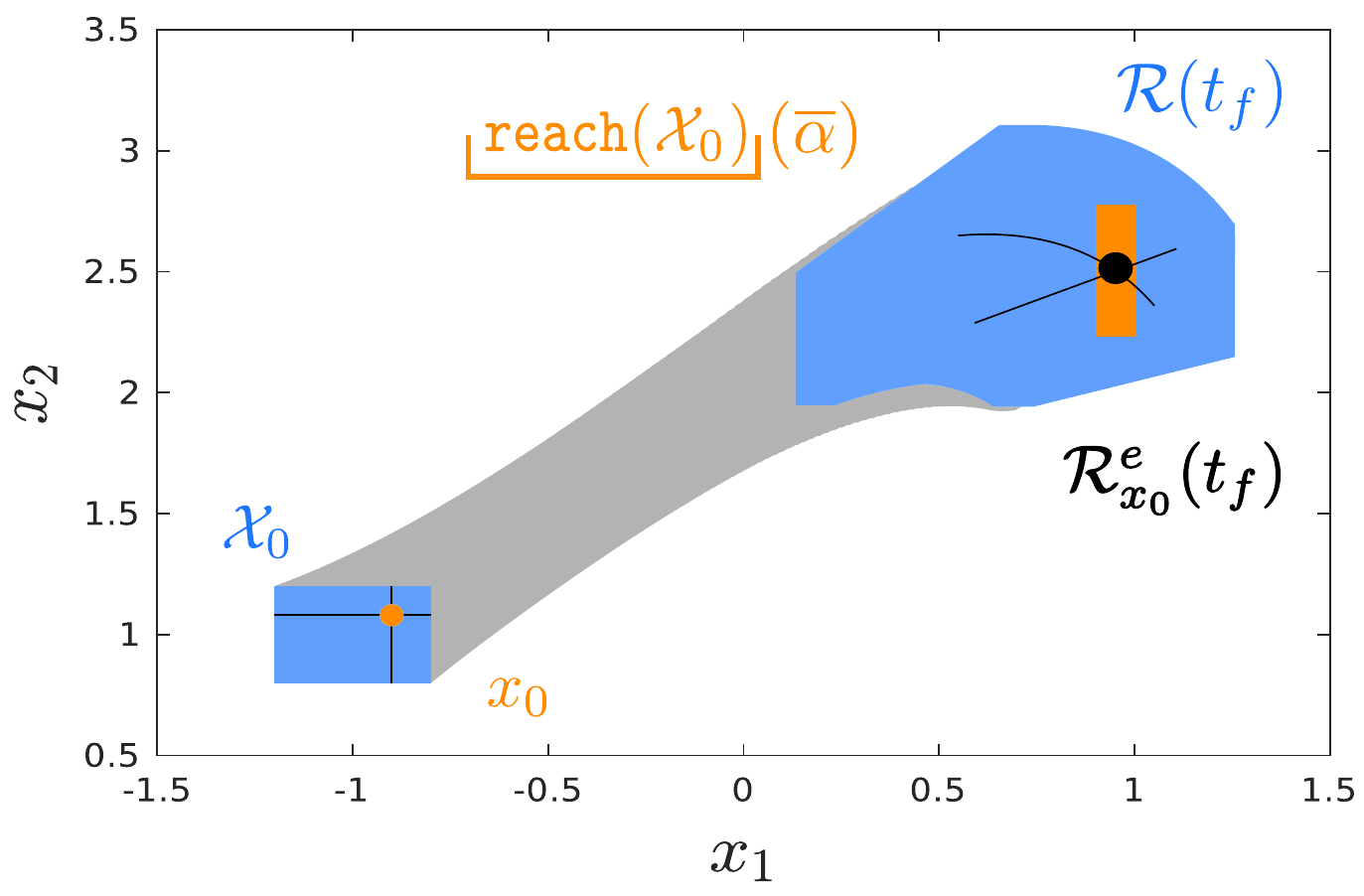}
	\caption{Visualization of the reachable set from Example~\ref{ex:reach}. The black lines correspond to the points where $\mathbf{\alpha_1 = 0.5 = const.}$ and $\mathbf{\alpha_2 = 0.4 = const.}$ holds.}
	\label{fig:ExampleReach}
	\end{center}
\end{figure}

\begin{example}
	We consider the following two-dimensional system
	\begin{equation}
		\begin{split}
			& \dot x_1 = x_2 \\
			& \dot x_2 = (1 - x_1^2) x_2 - x_1,
		\end{split}
	\end{equation}
	with the initial set
	\begin{equation}
		\mathcal{X}_0 = \bigg \{ \begin{bmatrix} -1 \\ 1 \end{bmatrix} + \begin{bmatrix} 0.2 \\ 0 \end{bmatrix} \alpha_1 + \begin{bmatrix} 0 \\ 0.2 \end{bmatrix} \alpha_2 ~ \bigg | ~ \alpha_1, \alpha_2 \in [-1,1] \bigg \},
		\label{eq:initialSet}
	\end{equation}
	and the initial point $x_0 = [-0.9~1.08]^T \in \mathcal{X}_0$. The initial point $x_0$ can be parameterized with $\overline{\alpha} = [0.5~0.4]^T$, so that $x_0 = \eval{\mathcal{X}_0}(\overline{\alpha})$. 
	The computation of the reachable set for a time horizon of $t_f = 1s$ with Alg.~\ref{alg:reach} yields
	\begin{equation}
		\begin{split}
		\mathcal{R}(t_f) = \bigg \{ & \begin{bmatrix} 0.73 \\ 2.52 \end{bmatrix} + \begin{bmatrix} 0.25 \\ -0.1 \end{bmatrix} \alpha_1 + \begin{bmatrix} 0.26 \\ 0.2 \end{bmatrix} \alpha_2   \\
		& + \begin{bmatrix} -0.04 \\ -0.09 \end{bmatrix} \alpha_1^2 - \begin{bmatrix} 0 \\ 0.1 \end{bmatrix} \alpha_1 \alpha_2 + \begin{bmatrix} 0.05 \\ 0 \end{bmatrix} \beta_1  \\
		& + \begin{bmatrix} 0 \\ 0.27 \end{bmatrix} \beta_2 ~ \bigg | ~ \alpha_1,\alpha_2, \beta_1,\beta_2 \in [-1,1] \bigg \}.
		\end{split}
		\label{eq:reachSet}
	\end{equation}
	As visualized in Fig.~\ref{fig:ExampleReach}, the exact reachable set $\mathcal{R}_{x_0}^e(t_f)$ for the initial point $x_0$ can be enclosed by evaluating \eqref{eq:reachSet} for the parameter values $\overline{\alpha} = [0.5~0.4]^T$ corresponding to the initial point $x_0$.
	\label{ex:reach}
\end{example}

We now prove the correctness of the concept demonstrated by Example~\ref{ex:reach}. Let us start with the computation of the abstraction error returned by Alg.~\ref{alg:linError}:

\begin{lemma}
	Given the reachable sets $\mathcal{R}^{(1)}(t_s)$,$\mathcal{R}^{(2)}(t_s) \subset \Rn$ with $\mathcal{R}^{(1)}(t_s) \subseteq \mathcal{R}^{(2)}(t_s)$ and the sets of applied abstraction errors $\overline{\Psi}^{(1)}(\tau_s)$, $\overline{\Psi}^{(2)}(\tau_s) \subset \Rn$ with $\overline{\Psi}^{(1)}(\tau_s) \subseteq \overline{\Psi}^{(2)}(\tau_s)$, it holds that
	\begin{equation}
		\begin{split}
			& \Psi^{\Delta (1)}(\tau_s) \subseteq \Psi^{\Delta (2)}(\tau_s) \\
			& with ~~ \Psi^{\Delta (1)}(\tau_s) = \operator{abstrErr}\big(\mathcal{R}^{(1)}(t_s),\overline{\Psi}^{(1)}(\tau_s)\big) \\
			& ~~~~~~~~ \Psi^{\Delta (2)}(\tau_s) = \operator{abstrErr}\big(\mathcal{R}^{(2)}(t_s),\overline{\Psi}^{(2)}(\tau_s)\big),
		\end{split}
	\end{equation}
	so that \operator{abstrErr} as defined by Alg.~\ref{alg:linError} satisfies \eqref{eq:binary}.
	\label{lemma:linError}
\end{lemma}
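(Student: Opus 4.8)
The plan is to treat \operator{abstrErr} (Alg.~\ref{alg:linError}) as a composition of the elementary set operations \operator{zono}, Cartesian product, $\operator{post}^{\Delta}$, \operator{varInputs}, Minkowski addition, and \operator{lagrangeRemainder}, each of which is assumed in the excerpt to satisfy the monotonicity property \eqref{eq:unary} (for unary operations) or \eqref{eq:binary} (for binary operations). By Lemma~\ref{lemma:conseqSubset} (and its extension to binary operations noted thereafter), any composition of such operations again satisfies \eqref{eq:unary}/\eqref{eq:binary}; since Alg.~\ref{alg:linError} produces $\Psi^{\Delta}(\tau_s)$ from the two set arguments $\mathcal{R}(t_s)$ and $\overline{\Psi}(\tau_s)$ purely by such a composition, monotonicity in those two arguments --- which is exactly the claim --- follows.

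Concretely, I would carry the hypotheses $\mathcal{R}^{(1)}(t_s)\subseteq\mathcal{R}^{(2)}(t_s)$ and $\overline{\Psi}^{(1)}(\tau_s)\subseteq\overline{\Psi}^{(2)}(\tau_s)$ forward through the seven lines. Line~\ref{line:zono}: \operator{zono} is monotone and $(\cdot)\times\mathcal{U}$ with the fixed set $\mathcal{U}$ preserves inclusion, giving $\mathcal{Z}_z^{(1)}(t_s)\subseteq\mathcal{Z}_z^{(2)}(t_s)$. Lines~\ref{line:postDelta1}--\ref{line:postDelta2}: $\operator{post}^{\Delta}$ is binary-monotone in its pair of set arguments (the argument $A$ is fixed throughout the algorithm), and \operator{zono} and $(\cdot)\times\mathcal{U}$ are monotone, so $\mathcal{R}_z^{\Delta(1)}(\tau_s)\subseteq\mathcal{R}_z^{\Delta(2)}(\tau_s)$. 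Line~\ref{line:varInput}: \operator{varInputs} is monotone in its two set arguments (with $\mathcal{U}^{\Delta}, B, D$ fixed), hence $\mathcal{V}^{\Delta(1)}(\tau_s)\subseteq\mathcal{V}^{\Delta(2)}(\tau_s)$. Line~\ref{line:reachSetInt}: $\oplus$ is monotone, so $\mathcal{R}^{(1)}(\tau_s)\subseteq\mathcal{R}^{(2)}(\tau_s)$. Line~\ref{line:lagRem}: \operator{lagrangeRemainder} is monotone (with $E, z^*$ fixed), so $\mathcal{L}^{(1)}(\tau_s)\subseteq\mathcal{L}^{(2)}(\tau_s)$. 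Line~\ref{line:linErrorSet}: one more application of monotonicity of $\oplus$ delivers $\Psi^{\Delta(1)}(\tau_s)\subseteq\Psi^{\Delta(2)}(\tau_s)$, which is precisely \eqref{eq:binary} for \operator{abstrErr}.

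The only point requiring care is the bookkeeping of which arguments are ``sets'' that vary between the two invocations and which are held fixed throughout Alg.~\ref{alg:linError}: the Taylor-expansion matrices $A, B, D, E$, the expansion point $z^*$, and the input sets $\mathcal{U}, \mathcal{U}^{\Delta}$ are identical in both runs, so each step genuinely is a unary or binary set operation to which the blanket assumptions preceding \eqref{eq:unary} and \eqref{eq:binary} apply, and the composition lemma can be invoked verbatim. I do not anticipate any genuine obstacle here; an alternative, marginally more self-contained presentation would inline the one-line argument of Lemma~\ref{lemma:conseqSubset} at each step instead of citing it, but that adds length without content.
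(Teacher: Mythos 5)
Your proof is correct and follows essentially the same route as the paper: the paper likewise argues that Alg.~\ref{alg:linError} is a composition of unary operations satisfying \eqref{eq:unary} and binary operations satisfying \eqref{eq:binary}, and invokes Lemma~\ref{lemma:conseqSubset} (referring to the flow chart in Fig.~\ref{fig:algorithmLinError} rather than walking through the lines explicitly). Your line-by-line bookkeeping of which arguments vary and which are held fixed is a more detailed rendering of the same argument, not a different one.
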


\begin{proof}
	As visualized in Fig.~\ref{fig:algorithmLinError}, Alg.~\ref{alg:linError} is a composition of unary set operations that satisfy \eqref{eq:unary} and binary set operations that satisfy \eqref{eq:binary}. Therefore, it holds according to Lemma~\ref{lemma:conseqSubset} that \operator{abstrErr} as defined by Alg.~\ref{alg:linError} satisfies \eqref{eq:binary}. \hfill ~
\end{proof}

\begin{figure}[h]
\begin{center}
	\includegraphics[width = 0.45 \textwidth]{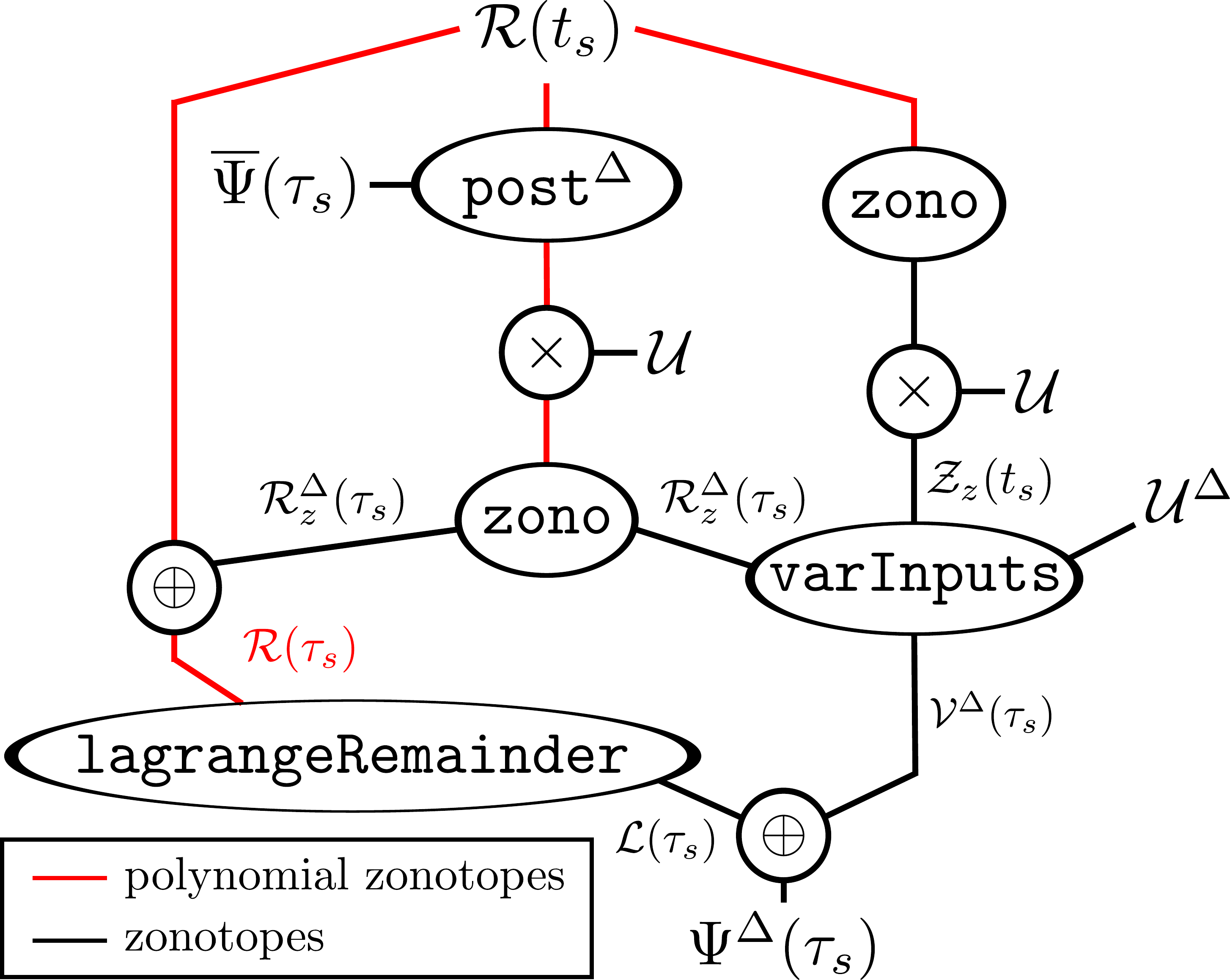}
	\caption{Flow chart for Alg.~\ref{alg:linError}.}
	\label{fig:algorithmLinError}
	\end{center}
\end{figure}

 Using the dependency-preserving properties of operations, we show that Alg.~\ref{alg:reach} is dependency-preserving. We start with the \operator{post} operator, i.e., a single iteration of Alg.~\ref{alg:reach}:

\begin{lemma}
	Given a reachable set $\mathcal{R}(t_s) \subset \Rn$ represented as a polynomial zonotope, it holds that
	\begin{equation}
		\begin{split}
			\forall \alpha \in \mathcal{I}_m: ~~ & \operator{post}\left( \eval{\mathcal{R}(t_s)}(\alpha),\Psi(\tau_s),z_s^* \right) \\ & \subseteq \eval{\operator{post}(\mathcal{R}(t_s),\mathbf{0},z_s^*)}(\alpha),
		\end{split}
		\label{eq:lemmaStep}
	\end{equation}
	where $\Psi(\tau_s)$ is the set of abstraction errors resulting from the computation of $\operator{post}(\mathcal{R}(t_s),\mathbf{0},z_s^*)$ and $z_s^*$ is the expansion point of the Taylor series.
	\label{lemma:step}
\end{lemma}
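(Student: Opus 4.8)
The plan is to unfold the definition of \operator{post} as it appears in Alg.~\ref{alg:reach} (lines~\ref{line:quadMap}--\ref{line:post}, the braced block) and verify that it is a composition of operations each of which is dependency-preserving, so that Lemma~\ref{lemma:conseqExecution} applies. Concretely, one iteration of \operator{post} computes $\mathcal{V}(t_s) = w \oplus B u^c \oplus \frac{1}{2}\operator{sq}(D,\mathcal{R}(t_s)\times\mathcal{U})$, then $\Psi^\Delta(\tau_s) = \operator{abstrErr}(\mathcal{R}(t_s),\overline\Psi(\tau_s))$, then $\Psi(\tau_s) = \mathcal{V}(t_s)\oplus\Psi^\Delta(\tau_s)$, $\mathcal{R}^\Delta(\tau_s) = \mathcal{R}^{p,\Delta}(\Psi^\Delta(\tau_s),r)$, and finally $\widehat{\mathcal{R}}(t_{s+1}) = e^{Ar}\otimes\mathcal{R}(t_s) \boxplus \Gamma(r)\otimes\mathcal{V}(t_s) \oplus \mathcal{R}^\Delta(\tau_s)$. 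The operations $\otimes$ (linear map), $\oplus$ (Minkowski sum), $\operator{sq}$ (quadratic map), $\times$ (Cartesian product), and $\boxplus$ (exact addition) are all dependency-preserving for polynomial zonotopes by Table~\ref{tab:parameterization} and the accompanying text; adding a constant vector ($w$, $Bu^c$) and scaling by a fixed matrix are trivially dependency-preserving as well.

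First I would handle the abstraction-error part. The subtlety is that in the left-hand side of \eqref{eq:lemmaStep} we feed the \emph{evaluated} set $\eval{\mathcal{R}(t_s)}(\alpha)$ into \operator{post}, but the abstraction errors $\Psi(\tau_s)$ appearing there are fixed to be those obtained from the \emph{full} set $\mathcal{R}(t_s)$ (and the statement writes $\mathbf{0}$ for the input errors so the repeat-loop is bypassed). Since $\eval{\mathcal{R}(t_s)}(\alpha) \subseteq \mathcal{R}(t_s)$ by the definition of the evaluation function, and by Lemma~\ref{lemma:linError} \operator{abstrErr} satisfies \eqref{eq:binary}, using the errors computed from the larger set $\mathcal{R}(t_s)$ is conservative: any error set valid for $\mathcal{R}(t_s)$ over-approximates the error set for the subset. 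So re-using $\Psi(\tau_s)$ rather than recomputing is sound, and within the right-hand side the same $\Psi(\tau_s)$ (equivalently the $\mathbf{0}$-seeded computation on the full set) is used, so the two sides are built from the same fixed error data.

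Then I would apply Lemma~\ref{lemma:conseqExecution}: \operator{post} with the abstraction errors held fixed is a composition $\operator{A}_k \circ \cdots \circ \operator{A}_1$ of dependency-preserving unary and binary operations acting on the polynomial zonotope $\mathcal{R}(t_s)$ (the other arguments $\mathcal{U}$, $w$, $Bu^c$, the matrices $e^{Ar},\Gamma(r),D$, and the fixed sets $\mathcal{V}(t_s),\Psi^\Delta(\tau_s),\mathcal{R}^\Delta(\tau_s)$ being constants with respect to the parameterization). By repeated application of Lemma~\ref{lemma:conseqExecution} (and its remark about binary operations), the whole composition is dependency-preserving, i.e. $\operator{post}(\eval{\mathcal{R}(t_s)}(\alpha),\Psi(\tau_s),z_s^*) \subseteq \eval{\operator{post}(\mathcal{R}(t_s),\Psi(\tau_s),z_s^*)}(\alpha)$, and since $\operator{post}(\mathcal{R}(t_s),\Psi(\tau_s),z_s^*) = \operator{post}(\mathcal{R}(t_s),\mathbf{0},z_s^*)$ by construction of $\Psi(\tau_s)$, this is exactly \eqref{eq:lemmaStep}.

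The main obstacle I anticipate is the careful bookkeeping around the two roles of the abstraction errors: making precise that the $\Psi(\tau_s)$ on the left-hand side is a \emph{constant} (not recomputed from the subset), and justifying via Lemma~\ref{lemma:linError} and \eqref{eq:binary} that this constant is still a valid over-approximation when the initial set is shrunk to $\eval{\mathcal{R}(t_s)}(\alpha)$. Everything else is a mechanical check that each listed operation appears in Table~\ref{tab:parameterization} as dependency-preserving for polynomial zonotopes (plus the trivial cases of constant-vector addition and fixed linear maps), followed by an invocation of Lemma~\ref{lemma:conseqExecution}.
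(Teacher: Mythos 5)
Your overall strategy matches the paper's: unfold \operator{post} into its constituent set operations, use Lemma~\ref{lemma:linError} for the abstraction-error branch, and conclude by composition. The genuine gap is your final invocation of Lemma~\ref{lemma:conseqExecution}. That lemma only covers compositions of \emph{dependency-preserving} operations, but the branch of \operator{post} that produces $\Psi^{\Delta}(\tau_s)$ and $\mathcal{R}^{\Delta}(\tau_s)$ (operation \operator{B} in Fig.~\ref{fig:algorithmReach}) is \emph{not} dependency-preserving in the sense of Def.~\ref{def:depPreserving}; it merely satisfies the monotonicity property \eqref{eq:binary}. So the claim that ``repeated application of Lemma~\ref{lemma:conseqExecution}'' makes the whole composition dependency-preserving does not go through as stated. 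You sense this and try to sidestep it by declaring $\Psi^{\Delta}(\tau_s)$ and $\mathcal{R}^{\Delta}(\tau_s)$ to be ``constants with respect to the parameterization,'' but that declaration is itself the missing step: one must argue (i) by \eqref{eq:binary} the error set obtained from $\eval{\mathcal{R}(t_s)}(\alpha)$ is contained in the one obtained from $\mathcal{R}(t_s)$, and (ii) the latter is a zonotope containing \emph{no} dependent factors $\alpha$, so that Minkowski-adding it commutes with evaluation, i.e.\ $\eval{\operator{A}(\mathcal{R}(t_s))}(\alpha)\oplus\operator{B}(\mathcal{R}(t_s),\Psi(\tau_s)) = \eval{\operator{A}(\mathcal{R}(t_s))\oplus\operator{B}(\mathcal{R}(t_s),\Psi(\tau_s))}(\alpha)$. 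This is exactly the paper's step \eqref{eq:proofLemma3}, and it is the one link in the chain that follows from neither Lemma~\ref{lemma:conseqSubset} nor Lemma~\ref{lemma:conseqExecution}; without its justification the final inclusion is not established.

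A second, smaller slip: you list $\mathcal{V}(t_s)$ among the ``constants.'' It is not: $\mathcal{V}(t_s)$ is obtained from $\mathcal{R}(t_s)$ via the quadratic map in Line~\ref{line:quadMap} and therefore carries the dependent factors $\alpha$; it belongs to the dependency-preserving branch (operation \operator{A}), and treating it as a fixed set would break the exact addition $\boxplus$ in Line~\ref{line:post}, which relies on $e^{Ar}\otimes\mathcal{R}(t_s)$ and $\Gamma(r)\otimes\mathcal{V}(t_s)$ sharing the same factors. Your first paragraph describes $\mathcal{V}(t_s)$ correctly, so this is an internal inconsistency to resolve rather than a conceptual error, but in the write-up $\mathcal{V}(t_s)$ must stay on the \operator{A} side of the split.
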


\begin{proof}
	A flow chart for the \operator{post} operation in Alg.~\ref{alg:reach} is shown in Fig.~\ref{fig:algorithmReach}. First, we consider the operations \operator{A}, \operator{B}, and \operator{C} illustrated by the blocks in Fig.~\ref{fig:algorithmReach}:
	
	~
	
	\noindent \textbf{Operation \operator{A}:}
	As visualized in Fig.~\ref{fig:algorithmReach}, operation $\operator{A}$ is defined by a composition of dependency-preserving operations so that \operator{A} is dependency-preserving according to Lemma~\ref{lemma:conseqExecution}.
	
	~
	
	\noindent \textbf{Operation \operator{B}:}
	For the repeat-until-loop in Lines~\ref{line:beginRepeat}-\ref{line:endRepeat} of Alg.~\ref{alg:reach} it is sufficient to consider only the values from the last iteration since only these are applied in subsequent computations. The operation \operator{abstrErr} satisfies \eqref{eq:binary} according to Lemma~\ref{lemma:linError}. As visualized in Fig.~\ref{fig:algorithmReach}, the binary operation $\operator{B}$ is defined by a composition of unary operations satisfying \eqref{eq:unary} and binary operations satisfying \eqref{eq:binary}, so that \operator{B} satisfies \eqref{eq:binary} according to Lemma~\ref{lemma:conseqSubset}.
	
	~
	
	\noindent \textbf{Operation \operator{C}:}
	According to Fig.~\ref{fig:algorithmReach}, operation \operator{C} is defined as 
	\begin{equation}
		\operator{C}(\mathcal{R}(t_s)) = \operator{A}(\mathcal{R}(t_s)) \oplus \operator{B}(\mathcal{R}(t_s),\Psi(\tau_s)).
		\label{eq:C}
	\end{equation}
	Since \operator{A} is dependency-preserving, it holds according to \eqref{eq:depPreserving} that
	\begin{equation}
		\forall \alpha \in \mathcal{I}_m: ~~ \operator{A}\left( \eval{\mathcal{R}(t_s)}(\alpha) \right) \subseteq \eval{\operator{A}(\mathcal{R}(t_s))}(\alpha).
		\label{eq:proofLemma1}
	\end{equation}
	Since $\eval{\mathcal{R}(t_s)}(\alpha) \subseteq \mathcal{R}(t_s)$ and \operator{B} satisfies \eqref{eq:binary},
	\begin{equation}
		\forall \alpha \in \mathcal{I}_m: ~~ \operator{B}\left( \eval{\mathcal{R}(t_s)}(\alpha),\Psi(\tau_s) \right) \subseteq \operator{B}(\mathcal{R}(t_s),\Psi(\tau_s)).
		\label{eq:proofLemma2}
	\end{equation}
	Furthermore, since \operator{B} returns a zonotope that does not contain the polynomial zonotope factors $\alpha$,
	\begin{equation}
		\begin{split}
			\forall \alpha \in \mathcal{I}_m: ~~ & \eval{\operator{A}(\mathcal{R}(t_s))}(\alpha) \oplus \operator{B}(\mathcal{R}(t_s),\Psi(\tau_s)) \\
			& = \eval{\operator{A}(\mathcal{R}(t_s)) \oplus \operator{B}(\mathcal{R}(t_s),\Psi(\tau_s))}(\alpha).
		\end{split}
		\label{eq:proofLemma3}
	\end{equation}
	Consequently,
	\begin{equation*}
		\begin{split}
		& \forall \alpha \in \mathcal{I}_m: \\
		& \operator{C}\left(\eval{\mathcal{R}(t_s)}(\alpha)\right) \overset{\eqref{eq:C}}{=}  \operator{A}\left(\eval{\mathcal{R}(t_s)}(\alpha)\right) \oplus \operator{B}\left(\eval{\mathcal{R}(t_s)}(\alpha),\Psi(\tau_s)\right) \\ & \overset{\eqref{eq:proofLemma1}}{\subseteq} \eval{\operator{A}(\mathcal{R}(t_s))}(\alpha) \oplus \operator{B}\left(\eval{\mathcal{R}(t_s)}(\alpha),\Psi(\tau_s)\right) \\
		& \overset{\eqref{eq:proofLemma2}}{\subseteq} \eval{\operator{A}(\mathcal{R}(t_s))}(\alpha) \oplus \operator{B}(\mathcal{R}(t_s),\Psi(\tau_s)) \\
		& \overset{\eqref{eq:proofLemma3}}{=} \eval{\operator{A}(\mathcal{R}(t_s)) \oplus \operator{B}(\mathcal{R}(t_s),\Psi(\tau_s))}(\alpha) \overset{\eqref{eq:C}}{=} \eval{\operator{C}(\mathcal{R}(t_s))}(\alpha),
		\end{split}	
	\end{equation*}
	which proves that operation \operator{C} is dependency-preserving (see \eqref{eq:depPreserving}).
	
	As visualized in Fig.~\ref{fig:algorithmReach}, the \operator{post} operation is defined by the composition of dependency-preserving operations resulting in a dependency-preserving operation according to Lemma~\ref{lemma:conseqExecution}. \hfill ~
\end{proof}	
	
\begin{figure}
\begin{center}
	\includegraphics[width = 0.45 \textwidth]{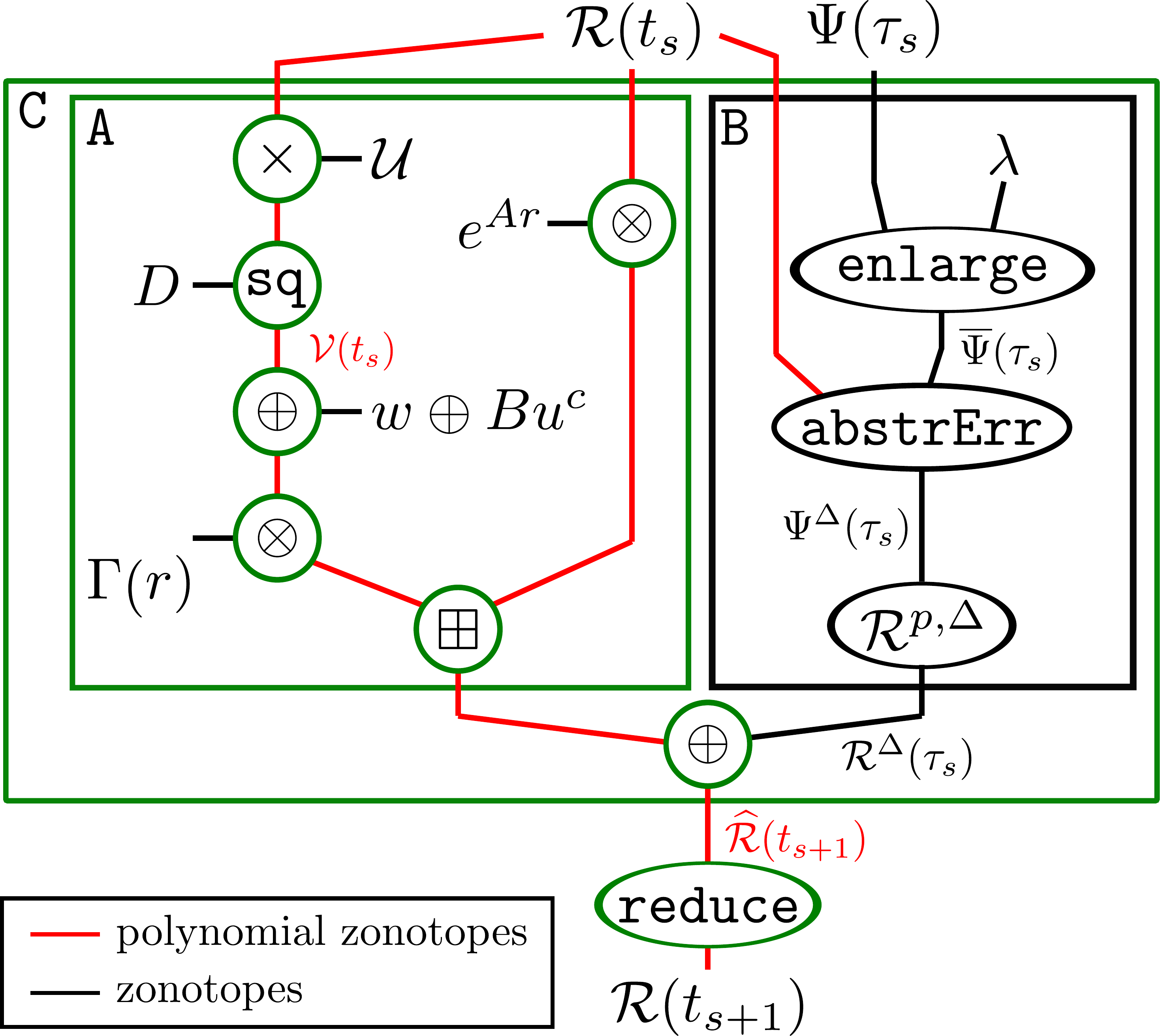}
	\caption{Flow chart for Alg.~\ref{alg:reach}. Dependency-preserving operations are marked in green.}
	\label{fig:algorithmReach}
	\end{center}
\end{figure}

Using the result for a single iteration, we now prove that Alg.~\ref{alg:reach} is dependency-preserving:

\begin{lemma}
	Given an initial set $\mathcal{X}_0 \subset \Rn$ represented as a polynomial zonotope, it holds that
	\begin{equation}
	\begin{split}
			\forall \alpha \in \mathcal{I}_m: ~~ & \operator{reach}\left( \eval{\mathcal{X}_0}(\alpha),\Psi(\tau_s),z_s^* \right) \\
			& \subseteq \eval{\operator{reach}(\mathcal{X}_0,\mathbf{0},z_s^*)}(\alpha)
	\end{split}
		\label{eq:lemmaReach}
	\end{equation}
	where $\Psi(\tau_s)$, $s = 1,\dots,N$ are the sets of abstraction errors resulting from the computation of $\operator{reach}(\mathcal{R}(t_s),\mathbf{0},z_s^*)$ in Alg.~\ref{alg:reach}, $z_s^*$, $s = 1,\dots,N$ are the expansion points of the Taylor series, $N = \lceil \frac{t_f}{r} \rceil$ is the number of time steps, $t_f$ is the time horizon, and $r$ is the time step size. 
	\label{lemma:reach}
\end{lemma}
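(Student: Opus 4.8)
The plan is to prove Lemma~\ref{lemma:reach} by induction on the number of time steps $s = 0, 1, \dots, N$, using Lemma~\ref{lemma:step} as the inductive engine for a single \operator{post} step. The key observation is that \operator{reach} is nothing but the $N$-fold composition of the per-step update, namely $\mathcal{R}(t_{s+1}) = (\operator{reduce} \circ \operator{post})(\mathcal{R}(t_s), \Psi(\tau_s), z_s^*)$, and that both the reachability computations on the subset $\eval{\mathcal{X}_0}(\alpha)$ and on the full set $\mathcal{X}_0$ reuse \emph{the same} abstraction error sets $\Psi(\tau_s)$ and expansion points $z_s^*$ — those fixed once and for all by the run on $\mathcal{X}_0$. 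This is what makes the evaluation $\eval{\cdot}(\alpha)$ a genuinely analytical, non-iterative map.

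\medskip
\noindent \textbf{Inductive setup.} Let $\mathcal{R}^{(\alpha)}(t_s)$ denote the reachable set at step $s$ produced by running Alg.~\ref{alg:reach} on the degenerate initial set $\eval{\mathcal{X}_0}(\alpha)$ (a zonotope), while reusing the frozen data $\Psi(\tau_s), z_s^*$; and let $\mathcal{R}(t_s)$ denote the set produced on $\mathcal{X}_0$. The claim to prove by induction is
\begin{equation*}
	\forall s \in \{0,\dots,N\}, ~ \forall \alpha \in \mathcal{I}_m: \quad \mathcal{R}^{(\alpha)}(t_s) \subseteq \eval{\mathcal{R}(t_s)}(\alpha).
\end{equation*}
The base case $s=0$ is immediate: $\mathcal{R}^{(\alpha)}(0) = \eval{\mathcal{X}_0}(\alpha) = \eval{\mathcal{R}(0)}(\alpha)$. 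For the inductive step, assume $\mathcal{R}^{(\alpha)}(t_s) \subseteq \eval{\mathcal{R}(t_s)}(\alpha)$. Applying the monotonicity property \eqref{eq:unary}/\eqref{eq:binary} (extended to compositions via Lemma~\ref{lemma:conseqSubset}) to the operation $\operator{reduce} \circ \operator{post}(\cdot, \Psi(\tau_s), z_s^*)$, then chaining with Lemma~\ref{lemma:step} applied to the polynomial zonotope $\mathcal{R}(t_s)$ — noting that $\operator{reduce}$ is dependency-preserving and that $\Psi(\tau_s)$ is precisely the abstraction error set generated by the run on $\mathcal{R}(t_s)$ — gives
\begin{equation*}
	\mathcal{R}^{(\alpha)}(t_{s+1}) \subseteq (\operator{reduce}\circ\operator{post})\big(\eval{\mathcal{R}(t_s)}(\alpha), \Psi(\tau_s), z_s^*\big) \subseteq \eval{(\operator{reduce}\circ\operator{post})(\mathcal{R}(t_s), \Psi(\tau_s), z_s^*)}(\alpha) = \eval{\mathcal{R}(t_{s+1})}(\alpha),
\end{equation*}
which closes the induction. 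Taking $s = N$ and recalling $t_N = t_f$ (with $N = \lceil t_f / r \rceil$) yields exactly \eqref{eq:lemmaReach}.

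\medskip
\noindent \textbf{Main obstacle.} The routine parts — the induction skeleton and the appeal to monotonicity — are easy. The subtle point that needs care is the bookkeeping of the abstraction errors: Lemma~\ref{lemma:step} is stated for $\operator{post}(\mathcal{R}(t_s), \mathbf{0}, z_s^*)$ with $\Psi(\tau_s)$ being \emph{the} errors produced by that computation, and one must argue that when we run the subset computation we are allowed to plug in these same $\Psi(\tau_s)$ rather than recomputing them (which would otherwise reintroduce the iteration we are trying to avoid). This is legitimate precisely because the abstraction error $\Psi(\tau_s)$ enters \operator{post} only through the independent (non-$\alpha$) part — see \eqref{eq:proofLemma3} in the proof of Lemma~\ref{lemma:step} — so enlarging the error set from whatever the subset would need to the (larger) full-set value only enlarges the result, preserving the $\subseteq$. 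I would also briefly note that the repeat-until loop in Lines~\ref{line:beginRepeat}--\ref{line:endRepeat} is not re-run for the subset: only the converged values matter, and those are inherited. Apart from flagging this reuse-of-frozen-data point explicitly, the proof is a clean induction.
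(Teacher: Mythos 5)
Your proof is correct and follows essentially the same route as the paper: the paper likewise writes $\operator{reach}$ as the $N$-fold composition $\operator{post}_N$ of the single-step operation (which already includes $\operator{reduce}$) and concludes by combining Lemma~\ref{lemma:step} with the closure of dependency preservation under composition (Lemma~\ref{lemma:conseqExecution}), which is exactly the content of your explicit induction. Your added remarks on reusing the frozen abstraction-error sets $\Psi(\tau_s)$ make explicit a point the paper leaves implicit in the statement of Lemma~\ref{lemma:step}, but they do not change the argument.
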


\begin{proof}
	The \operator{reach} operation as defined by Alg.~\ref{alg:reach} can be expressed equivalently as 		    
	\begin{equation*}
		\mathcal{R}(t_f) = \operator{reach}(\mathcal{X}_0,\Psi(\tau_s),z_s^*) = \operator{post}_N(\mathcal{X}_0,\Psi(\tau_s),z_s^*),
	\end{equation*}
	where $\operator{post}_N$ denotes the $N$-times consecutive composition of the \operator{post} operation
	\begin{equation*}
		\mathcal{R}(t_f) = \underbrace{\operator{post}( \dots \operator{post}(\mathcal{X}_0,\Psi(\tau_1),z_1^*) \dots ,\Psi(\tau_N),z_N^*)}_{\operator{post}_N(\mathcal{X}_0,\Psi(\tau_s),z_s^*)}.
	\end{equation*} 
	Then, since the \operator{post} operation is dependency-preserving (see \linebreak[4]Lemma~\ref{lemma:step}) and the composition of dependency-preserving operations yields a dependency-preserving operation (see \linebreak[4]Lemma~\ref{lemma:conseqExecution}), operation $\operator{reach}(\mathcal{X}_0,\Psi(\tau_s),z_s^*) = \operator{post}_N(\mathcal{X}_0,\linebreak[4]\Psi(\tau_s),z_s^*)$ is dependency-preserving. \hfill ~
\end{proof}

Finally, we formulate the main result:

\begin{theorem}
	Given an initial set $\mathcal{X}_0 \subset \Rn$ represented as a polynomial zonotope,
	\begin{equation}
	\begin{split}
	\forall \alpha \in \mathcal{I}_m: ~~ & \mathcal{R}_{x_0}^e(t_f) \subseteq \eval{\operator{reach}(\mathcal{X}_0,\mathbf{0},z_s^*)}(\alpha) \\
	& \mathrm{with} ~~ x_0 = \eval{\mathcal{X}_0}(\alpha),
	\end{split}
	\label{eq:mainTheorem}
\end{equation}
	where $z_s^*$ and \operator{reach} are as in Lemma~\ref{lemma:reach}.
	 \label{theo:main}
\end{theorem}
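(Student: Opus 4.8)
The plan is to obtain \eqref{eq:mainTheorem} by composing two facts: the \emph{soundness} of Algorithm~\ref{alg:reach}, i.e.\ the over-approximation property $\mathcal{R}(t_f) \supseteq \mathcal{R}_{\mathcal{X}_0}^e(t_f)$ recalled after its introduction and established in \cite{Althoff2013a,Kochdumper2019}, and the \emph{dependency preservation} of \operator{reach} proved in Lemma~\ref{lemma:reach}. By Prop.~\ref{prop:remIndepGen} we may assume without loss of generality that $\mathcal{X}_0$ has no independent generators, so that $\eval{\mathcal{X}_0}(\alpha) = \{f_{G,E}(\alpha)\}$ is the singleton $\{x_0\}$ and hence $\mathcal{R}_{x_0}^e(t_f) = \mathcal{R}_{\eval{\mathcal{X}_0}(\alpha)}^e(t_f)$.

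First I would establish the inclusion
\[
\mathcal{R}_{x_0}^e(t_f) \subseteq \operator{reach}\big(\eval{\mathcal{X}_0}(\alpha),\Psi(\tau_s),z_s^*\big).
\]
This is essentially a soundness statement, the only delicate point being that the abstraction errors $\Psi(\tau_s)$ and expansion points $z_s^*$ fed to this call were produced by the run $\operator{reach}(\mathcal{X}_0,\mathbf{0},z_s^*)$ on the \emph{larger} set $\mathcal{X}_0$, rather than by a fresh run on $\{x_0\}$. Because $\{x_0\} = \eval{\mathcal{X}_0}(\alpha) \subseteq \mathcal{X}_0$ and \operator{abstrErr} is monotone in its arguments (Lemma~\ref{lemma:linError}, i.e.\ it satisfies \eqref{eq:binary}), the abstraction errors a fresh run on $\{x_0\}$ would compute are subsets of $\Psi(\tau_s)$; hence the $\Psi(\tau_s)$ are over-conservative for the trajectory tube emanating from $x_0$ and satisfy the validity (termination) condition of the repeat-until loop in Lines~\ref{line:beginRepeat}--\ref{line:endRepeat} also for the subset, so the standard soundness argument for Algorithm~\ref{alg:reach} applies verbatim and yields the displayed inclusion. (If one prefers to keep independent generators, the same step works via $\mathcal{R}_{x_0}^e(t_f) \subseteq \mathcal{R}_{\eval{\mathcal{X}_0}(\alpha)}^e(t_f) \subseteq \operator{reach}(\eval{\mathcal{X}_0}(\alpha),\Psi(\tau_s),z_s^*)$.)

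Then I would invoke Lemma~\ref{lemma:reach} for exactly these $\Psi(\tau_s)$ and $z_s^*$, which gives for every $\alpha \in \mathcal{I}_m$ that $\operator{reach}(\eval{\mathcal{X}_0}(\alpha),\Psi(\tau_s),z_s^*) \subseteq \eval{\operator{reach}(\mathcal{X}_0,\mathbf{0},z_s^*)}(\alpha)$. Chaining this with the inclusion from the previous step and recalling $x_0 = \eval{\mathcal{X}_0}(\alpha)$ yields precisely \eqref{eq:mainTheorem}. The main obstacle is the bookkeeping in the first step: making it rigorous that reusing the abstraction errors obtained for $\mathcal{X}_0$ in the extraction for the sub-initial-set $\{x_0\}$ does not compromise soundness, which is exactly where the monotonicity of \operator{abstrErr} (Lemma~\ref{lemma:linError}) and of the underlying set operations \eqref{eq:binary} is used; once this is granted, the theorem is a one-line composition of soundness and Lemma~\ref{lemma:reach}.
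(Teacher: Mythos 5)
Your proposal is correct and follows essentially the same two-step route as the paper: first the soundness inclusion $\mathcal{R}_{x_0}^e(t_f) \subseteq \operator{reach}(\eval{\mathcal{X}_0}(\alpha),\Psi(\tau_s),z_s^*)$ with the abstraction errors taken from the run on the full $\mathcal{X}_0$, then Lemma~\ref{lemma:reach} to pass to $\eval{\operator{reach}(\mathcal{X}_0,\mathbf{0},z_s^*)}(\alpha)$. Your additional justification of why reusing the over-conservative $\Psi(\tau_s)$ preserves soundness (via the monotonicity in Lemma~\ref{lemma:linError}) is a point the paper simply delegates to the citation of \cite{Althoff2013a}, so if anything your argument is slightly more self-contained.
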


\begin{proof}
	As shown in \cite{Althoff2013a}, Alg.~\ref{alg:reach} computes an over-\linebreak[4]approximation of the exact reachable set as follows:
	\begin{equation}
		\mathcal{R}_{x_0}^e(t_f) \subseteq \operator{reach}\left(\eval{\mathcal{X}_0}(\alpha),\Psi(\tau_s),z_s^* \right),
		\label{eq:subsetExactReach}
	\end{equation}
	where $\Psi(\tau_s)$, $s = 1,\dots,N$ are the sets of abstraction errors resulting from the computation of $\operator{reach}(\mathcal{X}_0,\mathbf{0},z_s^*)$ so that according to Lemma~\ref{lemma:reach} we have:
	\begin{equation*}
	\begin{split}
			\forall \alpha \in \mathcal{I}_m: ~~ & \mathcal{R}_{x_0}^e(t_f) \overset{\eqref{eq:subsetExactReach}}{\subseteq} \operator{reach}\big(\eval{\mathcal{X}_0}(\alpha),\Psi(\tau_s),z_s^*\big) \\
			& \subseteq \eval{\operator{reach}(\mathcal{X}_0,\mathbf{0},z_s^*)}(\alpha),
	\end{split}
	\end{equation*}
	which concludes the proof. \hfill ~
\end{proof}

Since Theorem~\ref{theo:main} holds for all points $x_0 \in \mathcal{X}_0$ inside the initial set $\mathcal{X}_0$, it is obvious that Theorem~\ref{theo:main} equally holds for all sets $\widehat{\mathcal{X}}_0 \subseteq \mathcal{X}_0$. Furthermore, Theorem~\ref{theo:main} also holds at intermediate time steps $\mathcal{R}(t_s)$, $s = 1,\dots,N$ with $N = \lceil \frac{t_f}{r} \rceil$. Since the time interval reachable set $\mathcal{R}(\tau_s)$ is computed by adding uncertainty to the time point reachable set $\mathcal{R}(t_s)$ (see Line~\ref{line:reachSetInt} of Alg.~\ref{alg:linError}), Theorem~\ref{theo:main} equally holds for the reachable set $\mathcal{R}(\tau_s)$ of time intervals $\tau_s = [t_{s},t_{s+1}]$.

\subsection{Computational Complexity}

We now derive the computational complexity for extracting a reachable subset from the final reachable set according to \eqref{eq:evalFunPolyZono}, where
\begin{equation}
	f_{G,E}(\alpha) = \sum_{i=1}^h \bigg ( \underbrace{ \prod_{k=1}^m \alpha_k^{E_{(k,i)}} }_{P_i} \bigg ) G_{(\cdot,i)}.
	\label{eq:complexity}
\end{equation}
Computation of each $P_i$ in \eqref{eq:complexity} requires $m$ exponentiations and $m-1$ multiplications, and computation of $P_i G_{(\cdot,i)}$ requires $n$ multiplications. Since there are $h$ terms $P_i G_{(\cdot,i)}$ in \eqref{eq:complexity}, computation of all terms requires $h(2m+n-1)$ operations. The computation of the outer sum in \eqref{eq:complexity} requires $n(h-1)$ additions, so that the computation of $f_{G,E}(\alpha)$ requires in total $h(2m+n-1) + n(h-1)$ operations. It holds for the number of dependent factors $m$ and the number of dependent generators $h$ that $m = c_m n$ and $h = c_h n$ with $c_m,c_h \in \R_{\geq 0}$. The complexity for the extraction of a reachable subset is therefore $\mathcal{O}(h(2p+n-1) + n(h-1)) = \mathcal{O}(n^2)$. Since the computation of the reachable set with the conservative polynomialization approach has complexity $\mathcal{O}(n^5)$ \cite{Althoff2013a}, our novel extraction of reachable subsets is computational much more efficient.


\section{Applications}
\label{sec:Applications}

There are a lot of different applications for the result presented in this paper. In this section, we introduce some of the possible applications and demonstrate their efficiency using numerical examples. However, due to limited space, we only present the general concepts and omit implementation details. The implementation of our approach including all numerical examples will be made publicly available in the next release of the CORA toolbox \cite{Althoff2015a}. Also, we carried out all computations in MATLAB on a 2.9GHz quad-core i7 processor with 32GB memory.

\begin{table}
\begin{center}
\caption{Comp. times for the numerical examples in [s].}
\label{tab:compTime}
\begin{tabular}{ l c c c }
 \toprule
 \multirow{2}{2cm}{\textbf{Application}} & \multicolumn{2}{c}{\textbf{Computation Time in [s]}} \\ \cmidrule{2-3}
		& \textbf{New Approach} & \textbf{Prev. Solution} \\ \midrule 
 Falsification (2D) & $0.12$ & $0.38$ \\
 Falsification (12D) & $0.13$ & $10.4$ \\ 
 Optimization & $0.13$ & $172$ \\
 Safe Control & $0.06$ & $21.4$ \\
 \bottomrule 
\end{tabular}
\end{center}
\end{table}

\subsection{Falsifying States}

Let us consider a specification defined by a linear inequality constraint $a^T x \leq b$, $a \in \Rn,b \in \R$. If the final reachable set $\mathcal{R}(t_f)$ for the initial set $\mathcal{X}_0$ does not fulfill the specification, it would be helpful to know which initial states result in a violation. According to Theorem~\ref{theo:main}, the states inside the set $\mathcal{S} \subseteq \mathcal{X}_0$ defined as 
\begin{equation*}
	\begin{split}
		& \mathcal{S} = \bigcup_{\alpha \in \mathcal{B}} \eval{\mathcal{X}_0}(\alpha) \\
		& \mathrm{with} ~~ \mathcal{B} = \left \{ \alpha \in \mathcal{I}_m ~ | ~ a^T \otimes \eval{ \mathcal{R}(t_f)}(\alpha) \leq b \right \}
	\end{split}
\end{equation*}
are guaranteed to fulfill the specification. The set of states $\mathcal{F} \subseteq \mathcal{X}_0$ that potentially result in the violation of the specification can consequently be computed as $\mathcal{F} = \mathcal{X}_0 \setminus \mathcal{S}$.

Theorem~\ref{theo:main} is used to efficiently determine an initial point as well as a suitable input trajectory falsifying the specification. The initial point $x^* \in \mathcal{X}_0$ that results in the largest violation of the specification can be computed by solving the optimization problem
\begin{equation}
	\begin{split}
		& x^* = \eval{\mathcal{X}_0}(\alpha^*) \\
		& \mathrm{with} ~~ \alpha^* = \argmax_{\alpha \in \mathcal{I}_p} ~~ a^T \otimes \eval{\mathcal{R}(t_f)}(\alpha).
	\end{split}
	\label{eq:falsification}
\end{equation}
Since this reduces the \textit{safety falsification} task to finding a suitable input trajectory, falsifying trajectories can be found much more efficiently as shown by an example:

\begin{figure}
\begin{center}
	\includegraphics[width = 0.45 \textwidth]{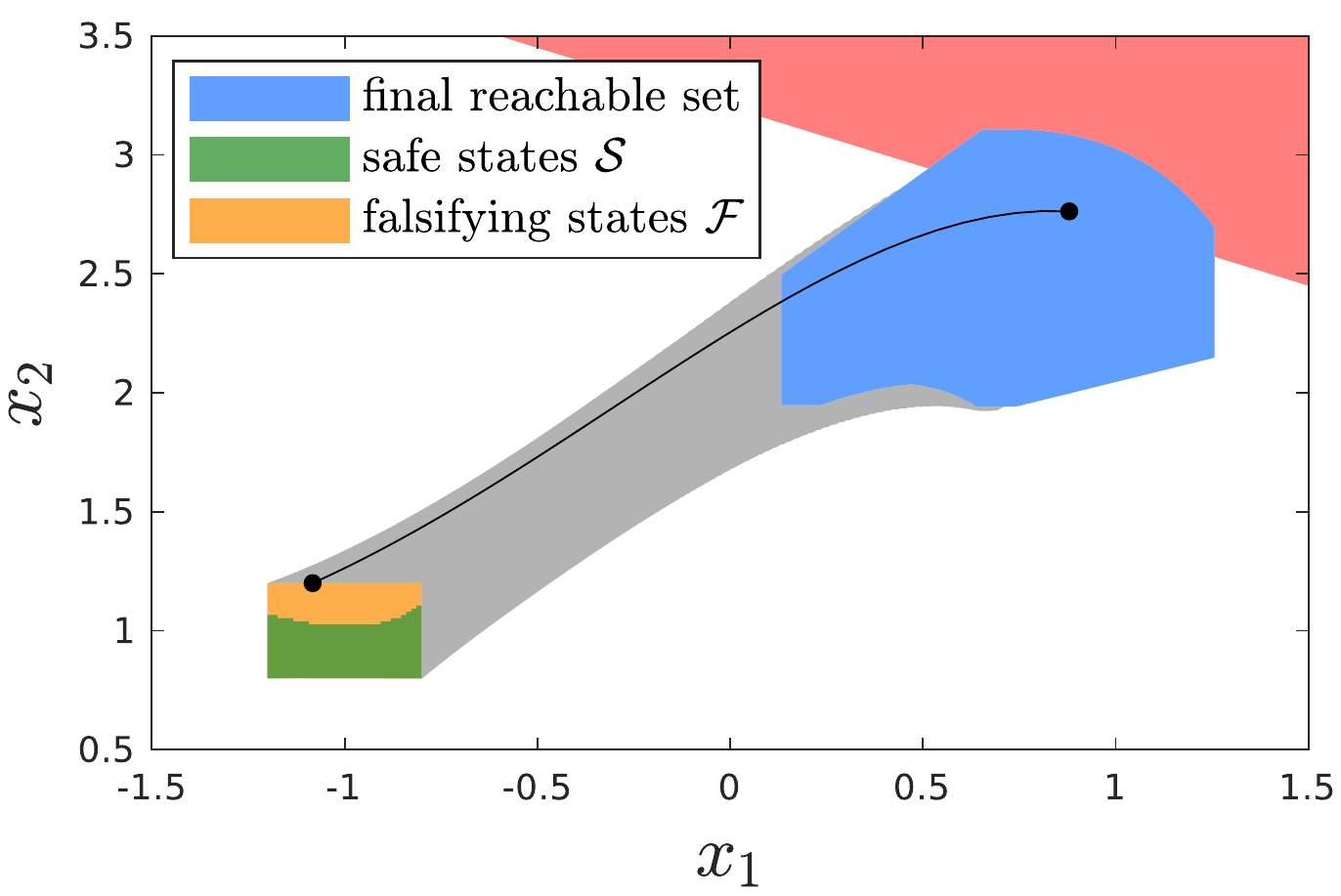}
	\caption{Visualization of Example~\ref{ex:falsification}. The determined falsifying trajectory is shown in black.}
	\label{fig:ExampleFalsification}
	\end{center}
\end{figure}

\begin{example}
	We consider the reachability problem from Example~\ref{ex:reach} in combination with the specification $x_1 + 2 x_2 \leq 6.4$. As shown in Fig.~\ref{fig:ExampleFalsification}, the final reachable set violates the constraint, where in green the safe states $\mathcal{S}$ and in orange the falsifying states $\mathcal{F}$ are visualized. Since the reachability problem from Example~\ref{ex:reach} does not include uncertain inputs, a falsifying trajectory can simply be determined by solving the optimization problem in \eqref{eq:falsification}, which takes $0.12$ seconds (see Table~\ref{tab:compTime}). As a comparison, we also determined a falsifying trajectory using the simulated annealing algorithm of the falsification toolbox S-TALIRO {\normalfont \cite{Annapureddy2011}}. Since the simulated annealing algorithm is non-deterministic we computed the average computation time from $10$ executions, which results in the value $0.38$ seconds (see Table~\ref{tab:compTime}).
	\label{ex:falsification}
\end{example}

We demonstrate the scalability of our approach with the system dimension by a second example:

\begin{example}
	We consider the 12-dimensional quadrotor benchmark from the ARCH 19 competition {\normalfont \cite{ARCH19nonlinear}} in combination with the specification $x_3 \leq 1.355 m ~ \forall t \in [0s,5s]$ (see Fig.~\ref{fig:ExampleFalsification12D}). Since the benchmark does not include uncertain inputs, a falsifying trajectory can be computed by solving \eqref{eq:falsification}, which takes $0.13$ seconds (see Table~\ref{tab:compTime}). The computation time of the simulated annealing algorithm from S-TALIRO averaged over $10$ executions is $10.4$ seconds (see Table~\ref{tab:compTime}).
	\label{ex:falsification12D}
\end{example}

\begin{figure}
\begin{center}
	\includegraphics[width = 0.44 \textwidth]{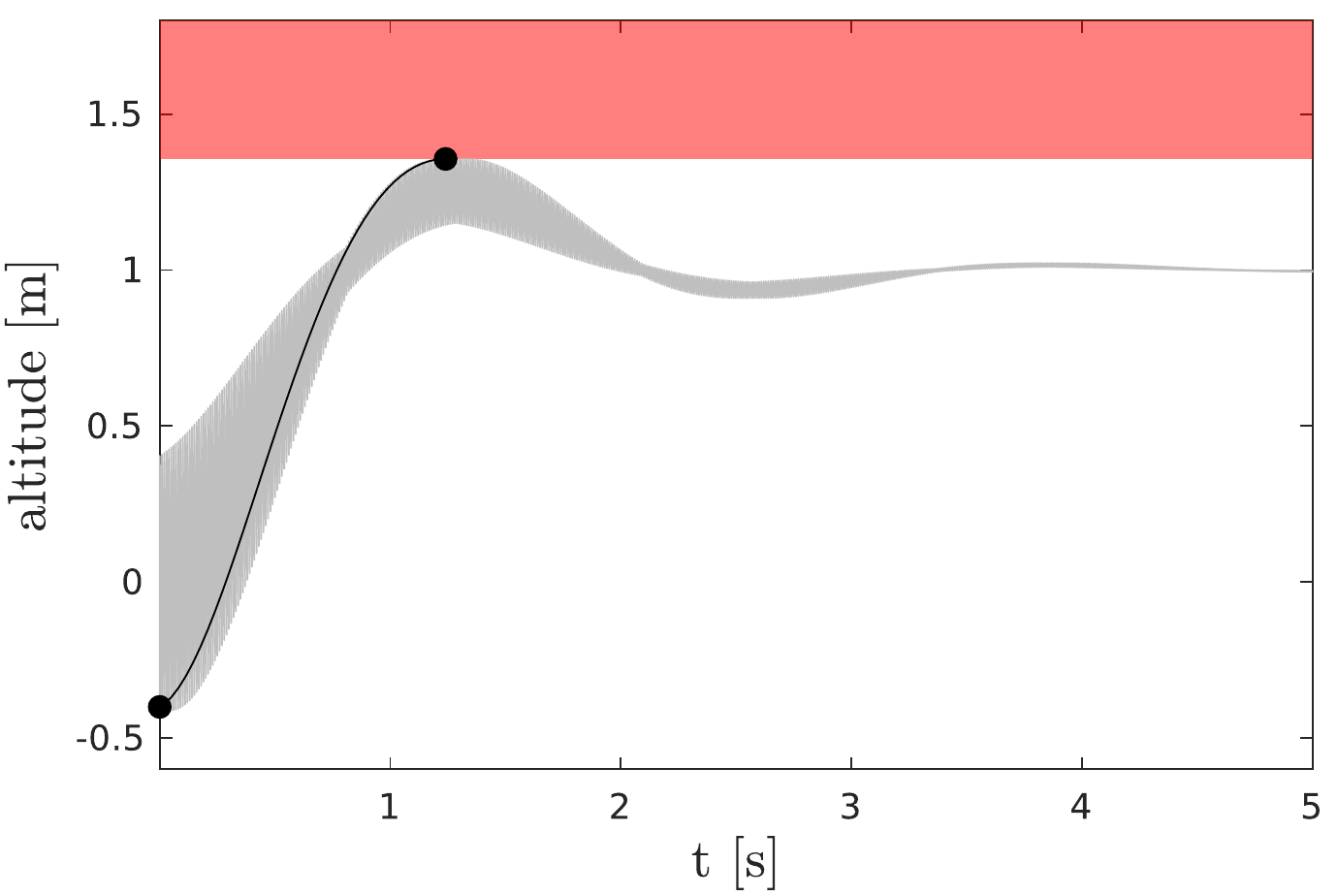}
	\caption{Visualization of Example~\ref{ex:falsification12D}. The determined falsifying trajectory is shown in black.}
	\label{fig:ExampleFalsification12D}
	\end{center}
\end{figure}

\subsection{Optimization over Reachable Sets}

Since reachability analysis is computational expensive, optimization over reachable sets is often infeasible. However, with our new approach it is possible to achieve major speed-ups for optimizing over reachable sets:

\begin{figure}
\begin{center}
	\includegraphics[width = 0.45 \textwidth]{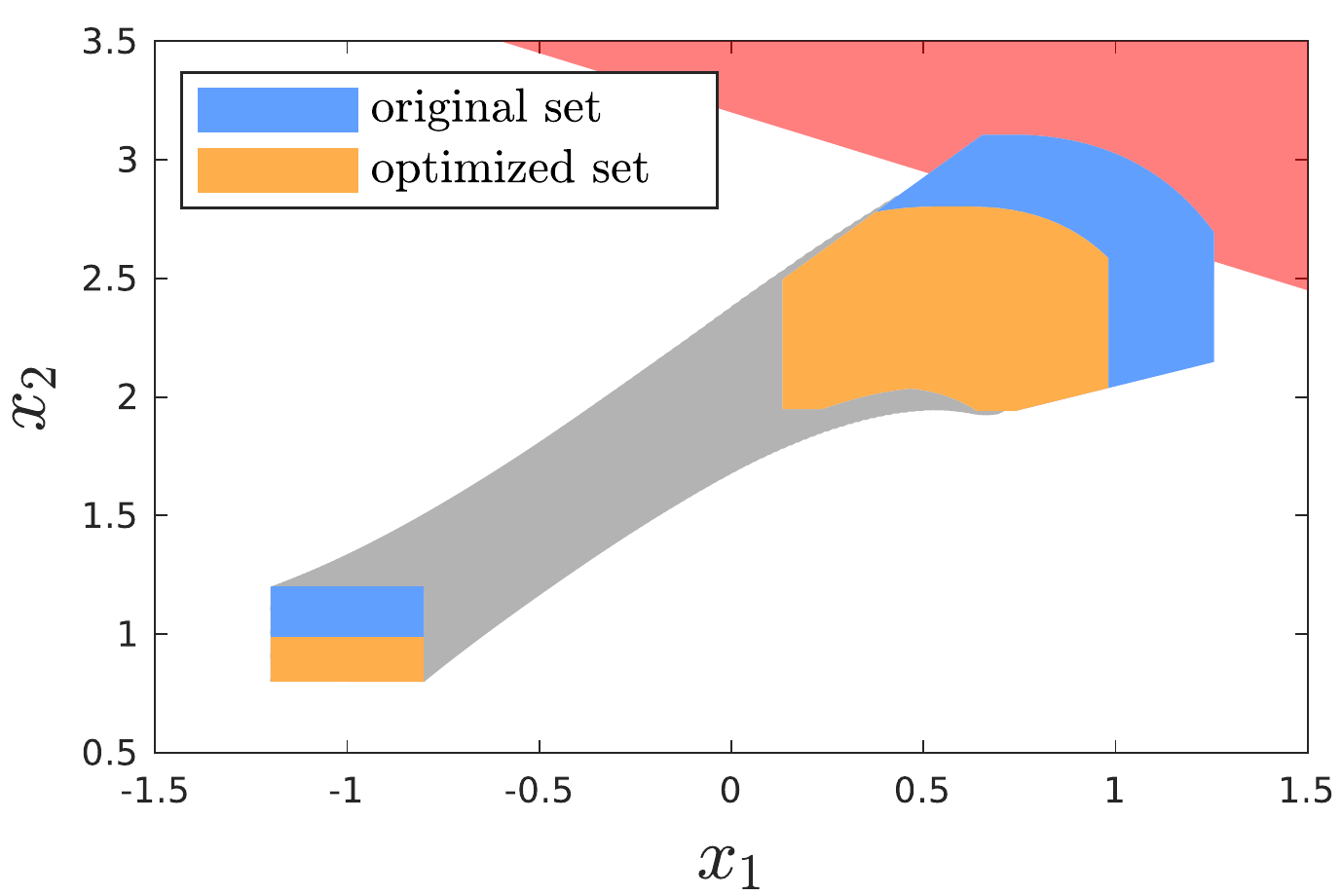}
	\caption{Visualization of Example~\ref{ex:optimization}.}
	\label{fig:ExampleOptimization}
	\end{center}
\end{figure}

\begin{example}
	We consider the reachability analysis problem from Example~\ref{ex:reach} and the inequality constraint $[1~2]~x \leq 6.4$. As shown in Fig.~\ref{fig:ExampleOptimization}, the final reachable set violates the constraint. We want to determine the modified initial set $\mathcal{X}_0^* \subset \mathcal{X}_0$ with the maximum volume for which the final reachable set satisfies the constraint, which can be formulated as an optimization problem: We use the upper and lower bounds $\underline{\alpha}_1,\overline{\alpha}_1,\underline{\alpha}_2,\overline{\alpha}_2$ for the factors $\alpha_1$ and $\alpha_2$ as the variables for the optimization problem. Consequently, the initial set from Example~\ref{ex:reach} becomes 
	\begin{equation*}
		\begin{split}
			\mathcal{X}_0(\underline{\alpha}_1,\overline{\alpha}_1,\underline{\alpha}_2,\overline{\alpha}_2) = \bigg \{ & \begin{bmatrix} -1 \\ 1 \end{bmatrix} + \begin{bmatrix} 0.2 \\ 0 \end{bmatrix} \alpha_1 + \begin{bmatrix} 0 \\ 0.2 \end{bmatrix} \alpha_2 ~ \bigg | \\
			& \alpha_1 \in [\underline{\alpha}_1,\overline{\alpha}_1],~ \alpha_2 \in [\underline{\alpha}_2,\overline{\alpha}_2] \bigg \},
		\end{split}
	\end{equation*}
	 where we denote the dependence of the initial set on the bounds $\underline{\alpha}_1,\overline{\alpha}_1,\underline{\alpha}_2,\overline{\alpha}_2$ by $\mathcal{X}_0(\underline{\alpha}_1,\overline{\alpha}_1,\underline{\alpha}_2,\overline{\alpha}_2)$. The optimization problem can be formulated as
	 \begin{equation}
	 	\begin{split}
	 		& \max_{\underline{\alpha}_1,\overline{\alpha}_1,\underline{\alpha}_2,\overline{\alpha}_2} ~ \operator{volume}\big(\mathcal{X}_0(\underline{\alpha}_1,\overline{\alpha}_1,\underline{\alpha}_2,\overline{\alpha}_2)\big) \\
	 		& ~ \\
	 		& ~~~~~ s.t. ~ [1~2] \otimes \operator{reach}\big(\mathcal{X}_0(\underline{\alpha}_1,\overline{\alpha}_1,\underline{\alpha}_2,\overline{\alpha}_2),\mathbf{0},z_s^*\big) \leq 6.4,
	 	\end{split}
	 	\label{eq:optProblem}
	 \end{equation}
	 where the operator $\operator{volume}(\mathcal{S})$ returns the volume of a set $\mathcal{S} \subset \Rn$. The optimization problem \eqref{eq:optProblem} is hard to solve since each evaluation of the constraint function requires the computationally expensive execution of the reachability algorithm. However, if we exploit the dependency preservation between the initial states and the reachable states introduced in this paper, the constraint can be equivalently formulated as 
	 \begin{equation}
	 	\max_{\substack{\alpha_1 \in [\underline{\alpha}_1,\overline{\alpha}_1] \\ \alpha_2 \in [\underline{\alpha}_2,\overline{\alpha}_2]}} ~[1~2] \otimes \eval{\mathcal{R}(t_f)}\big([\alpha_1~\alpha_2]^T \big) \leq 6.4,
	 	\label{eq:ineqConstraint}
	 \end{equation}
	 where $\mathcal{R}(t_f)$ is the final reachable set (see \eqref{eq:reachSet}). Inserting the numerical values for $\mathcal{R}(t_f)$ from \eqref{eq:reachSet} into \eqref{eq:ineqConstraint} yields
	 \begin{equation*}
	 	\max_{\substack{\alpha_1 \in [\underline{\alpha}_1,\overline{\alpha}_1] \\ \alpha_2 \in [\underline{\alpha}_2,\overline{\alpha}_2]}}~ 0.05 \alpha_1 + 0.66 \alpha_2 - 0.22 \alpha_1^2 - 0.2 \alpha_1 \alpha_2 \leq 0.04.
	 \end{equation*}	
	 The volume of the initial set can be computed as 
	 \begin{equation*}
	 	\operator{volume}\big(\mathcal{X}_0(\underline{\alpha}_1,\overline{\alpha}_1,\underline{\alpha}_2,\overline{\alpha}_2)\big) = 0.04 \left(\overline{\alpha}_1 - \underline{\alpha}_1 \right) \left(\overline{\alpha}_2 - \underline{\alpha}_2 \right),
	 \end{equation*}	
	 which simplifies the optimization problem \eqref{eq:optProblem} to
	 \begin{equation}
	 	\begin{split}
	 		& \max_{\underline{\alpha}_1,\overline{\alpha}_1,\underline{\alpha}_2,\overline{\alpha}_2} ~ 0.04 \left(\overline{\alpha}_1 - \underline{\alpha}_1 \right) \left(\overline{\alpha}_2 - \underline{\alpha}_2 \right) \\
	 		& ~ \\
	 		& s.t. \max_{\substack{\alpha_1 \in [\underline{\alpha}_1,\overline{\alpha}_1] \\ \alpha_2 \in [\underline{\alpha}_2,\overline{\alpha}_2]}}~ 0.05 \alpha_1 + 0.66 \alpha_2 - 0.22 \alpha_1^2 - 0.2 \alpha_1 \alpha_2 \leq 0.04.
	 	\end{split}
	 	\label{eq:optProblemSimple}
	 \end{equation}
	 The solution for the optimization problem \eqref{eq:optProblemSimple} is visualized in Fig.~\ref{fig:ExampleOptimization}. As shown in Table~\ref{tab:compTime} solving the optimization problem \eqref{eq:optProblem} using MATLABs \texttt{fmincon} with the \texttt{interior-point} algorithm takes $221$ seconds, but solving the simplified optimization problem \eqref{eq:optProblemSimple} only takes $0.1$ seconds. 
	 \label{ex:optimization}
\end{example}

\subsection{Motion-Primitive Based Control}

In this section, we consider a scenario where a maneuver automaton is used to control a system. The approach in \cite{Schuermann2017a} constructs a provably safe maneuver automaton by using reachability analysis. In particular, for each motion-primitive of the maneuver automaton, the reachable set for an initial set $\mathcal{X}_0$ is computed offline. During online application our novel approach can directly extract the reachable set for a measured state $\widehat{x}$ from the offline-computed reachable set. Generally, since the reachable set for $\widehat{x}$ is much smaller than the reachable set for $\mathcal{X}_0$, planning with the reachable set for $\widehat{x}$ greatly reduces the conservatism as demonstrated with a numerical example:

\begin{figure}
\begin{center}
	\includegraphics[width = 0.48 \textwidth]{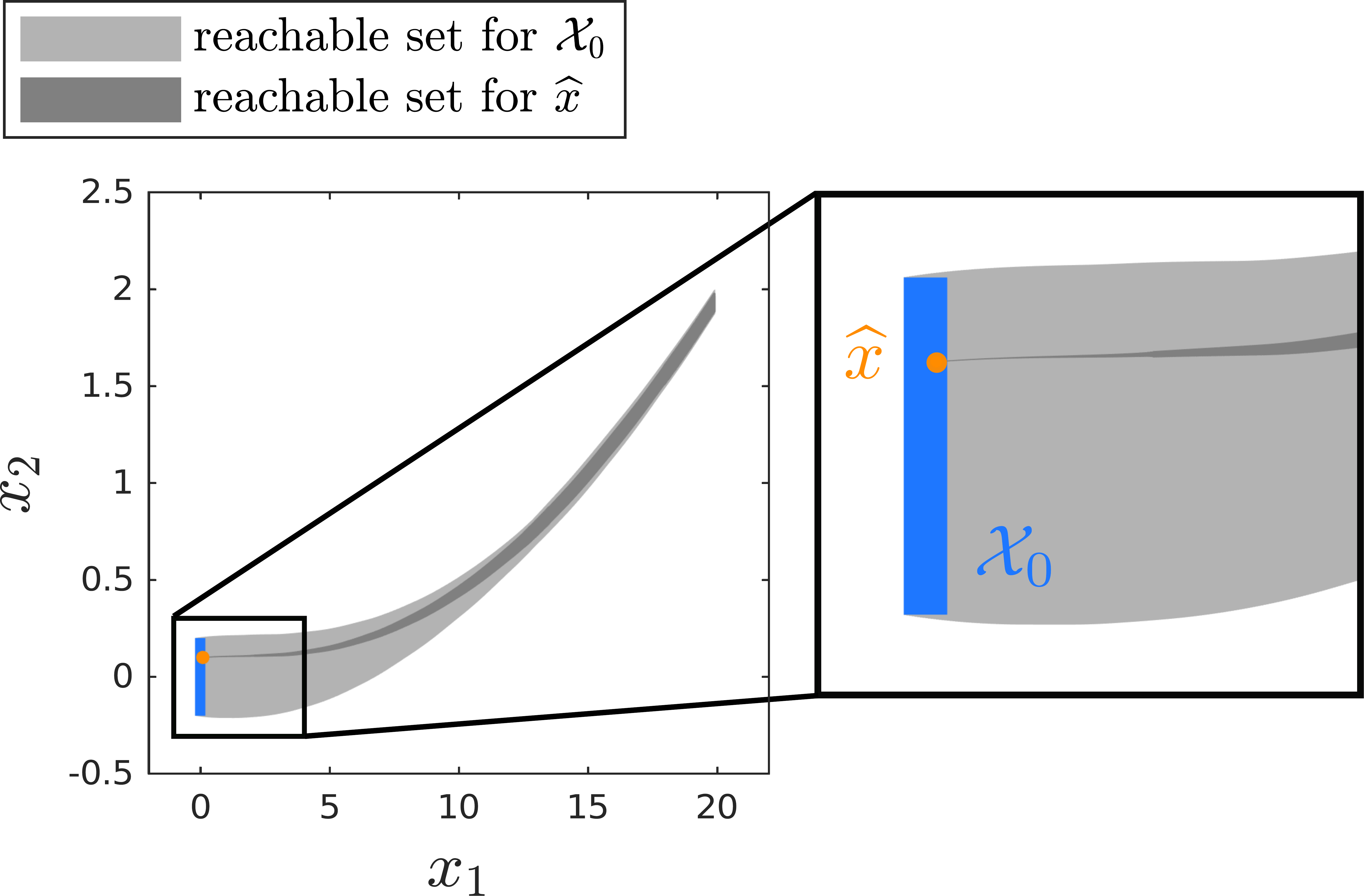}
	\caption{Visualization of Example~\ref{ex:control} with the initial set $\mathbf{\mathcal{X}_0}$ and the measured state $\widehat{x}$.}
	\label{fig:ExampleControl}
	\end{center}
\end{figure}

\begin{example}
	We consider the example of the turn-left maneuver of an autonomous car from {\normalfont \cite[Sec.~6]{Schuermann2017a}} with the measured velocity $\widehat{v} = 20.2 \frac{m}{s}$, the measured orientation $\widehat{\phi} = 0.01 rad$, and the measured position $\widehat{x}_1 = 0.1m$, $\widehat{x}_2 = 0.1m$. The reachable set for the initial set $\mathcal{X}_0$ and reachable set extracted for then measured point $\widehat{x} = [\widehat{v}~\widehat{\phi}~\widehat{x}_1~\widehat{x}_2]^T$ is visualized in Fig.~\ref{fig:ExampleControl}. As shown in Table~\ref{tab:compTime}, the extraction of the reachable set for the measured point from the offline-computed reachable set is significantly faster than the computation of the reachable set using Alg.~\ref{alg:reach}.
	\label{ex:control} 
\end{example}

\section{Conclusion}

In this paper, we showed that the computation of the reachable set for nonlinear systems with the conservative polynomialization approach using polynomial zonotopes preserves the dependence between the reachable states and the initial states. Since this novel method supports the efficient computation of reachable subsets inside pre-computed reachable sets, many possible applications are opening up. For the three applications \textit{safety falsification}, \textit{optimization over reachable sets}, and \textit{motion-primitive based control} we demonstrated with numerical examples that using our novel method results in significant speed-ups compared to the previous solutions. Besides, our method for extracting reachable subsets has complexity $\mathcal{O}(n^2)$ and is therefore computationally much more efficient than to compute the reachable subset from scratch, which has complexity $\mathcal{O}(n^5)$. Furthermore, this paper provides to some extent a method for unifying reachability analysis and falsification.

\begin{acks}
The author gratefully acknowledges financial support from the German Research Foundation (DFG) project faveAC under grant number AL 1185/5-1.
\end{acks}


\bibliographystyle{ACM-Reference-Format}
\bibliography{kochdumper,cpsGroup}
\end{document}